\colorlet{colorA}{black!20!red}
\colorlet{colorB}{black!20!blue}
\newcommand{\SubAlgo}[2]{#1 \SubAlgoBlock{#2}}
\let\oldnl\nl
\newcommand{\nonl}{\renewcommand{\nl}{\let\nl\oldnl}}
\newcommand{\DataFont}[1]{\textsf{#1}}
\newcommand{\FunctionFont}[1]{\texttt{#1}}
\newcommand{\VarNameX}{X}          \SetKwData{KwX}{\VarNameX}   \newcommand{\AlgoX}{\DataFont{\VarNameX}}
\newcommand{\VarNameVC}{ValClock}  \SetKwData{KwVC}{\VarNameVC} \newcommand{\AlgoVC}{\DataFont{\VarNameVC}}
\newcommand{\VarNameSC}{SendClock} \SetKwData{KwSC}{\VarNameSC} \newcommand{\AlgoSC}{\DataFont{\VarNameSC}}
\newcommand{\VarNameG}{G}          \SetKwData{KwG}{\VarNameG}   \newcommand{\AlgoG}{\DataFont{\VarNameG}}
\newcommand{\VarNameV}{V}          \SetKwData{KwV}{\VarNameV}   \newcommand{\AlgoV}{\DataFont{\VarNameV}}
\newcommand{\VarNamev}{v}   \SetKwData{Kwv}{\VarNamev}   \SetKwFunction{KwGV}{\VarNamev} \newcommand{\Algov}{\DataFont{\VarNamev}} \newcommand{\AlgoGV}{\FunctionFont{\VarNamev}}
\newcommand{\VarNamek}{k}   \SetKwData{Kwk}{\VarNamek}   \SetKwFunction{KwGK}{\VarNamek} \newcommand{\Algok}{\DataFont{\VarNamek}} \newcommand{\AlgoGK}{\FunctionFont{\VarNamek}}
\newcommand{\VarNamet}{t}   \SetKwData{Kwt}{\VarNamet}   \SetKwFunction{KwGT}{\VarNamet} \newcommand{\Algot}{\DataFont{\VarNamet}} \newcommand{\AlgoGT}{\FunctionFont{\VarNamet}}
\newcommand{\VarNamecl}{cl} \SetKwData{Kwcl}{\VarNamecl} \SetKwFunction{KwGCL}{\VarNamecl} \newcommand{\Algocl}{\DataFont{\VarNamecl}} \newcommand{\AlgoGCL}{\FunctionFont{\VarNamecl}}
\newcommand{\VarNameWrite}{update}   \SetKwFunction{KwWrite}{\VarNameWrite} \newcommand{\AlgoWrite}{\FunctionFont{\VarNameWrite}}
\newcommand{\VarNameSnap}{snapshot} \SetKwFunction{KwSnap}{\VarNameSnap}  \newcommand{\AlgoSnap}{\FunctionFont{\VarNameSnap}}
\newcommand{\VarNameM}{message}     \SetKwFunction{KwM}{\VarNameM}     \newcommand{\AlgoM}{\FunctionFont{\VarNameM}}
\newcommand{\Algog}{g}
\newcommand{\REG}{\DataFont{REG}}
\renewcommand{\algocf@caption@boxruled}{%
  \hrule
  \hbox to \hsize{%
    \vrule\hskip-0.4pt
    \vbox{   
       \vskip\interspacetitleboxruled%
       \unhbox\algocf@capbox\hfill
       \vskip\interspacetitleboxruled
       }%
     \hskip-0.4pt\vrule%
   }\nointerlineskip%
}%
\theoremstyle {plain}
\newtheorem{theorem}{Théorème}
\newtheorem{proposition}[theorem]{Proposition}
\newtheorem{lemma}[theorem]{Lemme}
\theoremstyle {definition}
\newtheorem{definition}{Définition}
\begin{document}

\cfoot{\thepage}

\title{On Composition and Implementation of Sequential Consistency (Extended Version)}
\author{Matthieu Perrin \\ LINA -- University of Nantes \\ \url{matthieu.perrin@univ-nantes.fr}
  \and Matoula Petrolia \\ LINA -- University of Nantes \\ \url{stamatina.petrolia@univ-nantes.fr}
  \and Achour Mostéfaoui \\ LINA -- University of Nantes \\ \url{achour.mostefaoui@univ-nantes.fr}
  \and Claude Jard \\ LINA -- University of Nantes \\ \url{claude.jard@univ-nantes.fr}
}
\date{}

\sloppy

\maketitle

\begin{abstract}
It has been proved that to implement a linearizable shared memory in synchronous message-passing systems it is necessary to wait for a time proportional to the uncertainty in the latency of the network for both read and write operations, while waiting during read or during write operations is sufficient for sequential consistency. 

This paper extends this result to crash-prone asynchronous systems. We propose a distributed algorithm that builds a sequentially consistent shared memory abstraction with snapshot on top of an asynchronous message-passing system where less than half of the processes may crash. We prove that it is only necessary to wait when a read/snapshot is immediately preceded by a write on the same process. 

We also show that sequential consistency is composable in some cases commonly encountered: 1) objects that would be linearizable if they were implemented on top of a linearizable memory become sequentially consistent when implemented on top of a sequential memory while remaining composable and 2) in round-based algorithms, where each object is only accessed within one round.

\vspace{2mm}
\begin{center}
\textbf{Key words}
\end{center}

Asynchronous message-passing system, Crash-failures, Composability, Sequential consistency, Shared memory, Snapshot.
\end{abstract}

%%%%%%%%%%%%%%%%%%%%
\section{Introduction}
%%%%%%%%%%%%%%%%%%%%

A distributed system is abstracted as a set of entities (nodes, processes, agents, etc) that communicate with each other using a communication medium. 
The two most used communication media are communication channels (message-passing system) and shared memory (read/write operations). 
Programming  with shared objects is generally more convenient as it offers a higher level of abstraction to the programmer,
therefore facilitates the work of designing distributed applications. 
A natural question is the level of consistency ensured by shared objects. 
An intuitive property is that shared objects should behave as if all processes accessed the same physical copy of the object. 
\emph{Sequential consistency}~\cite{lamport1979make} ensures that all the operations that happen in a distributed history 
appear as if they were executed sequentially, in an order that respects the sequential order of each process (called the \emph{process order}). 

Unfortunately, sequential consistency is not composable: if a program uses two or more objects,
despite each object being sequentially consistent individually, the set of all objects may not be sequentially consistent.
An example is shown in Fig.~\ref{fig:composability}, where two processes share two registers named $X$ and $Y$; although the operations of each register
may be totally ordered (the read precedes the write), it is impossible to order all the operations at once.
\emph{Linearizability}~\cite{herlihy1990linearizability} overcomes this limitation by adding constraints on real time: each operation appears at a single 
point in time, between its start event and its end event. As a consequence, linearizability enjoys the locality property \cite{herlihy1990linearizability} that ensures its composability. 
Because of this composability, much more effort has been focused on linearizability than on sequential consistency so far. 
However, one of our contributions implies that in asynchronous systems where no global clock can be implemented to measure real time, a process cannot distinguish between linearizability and sequential consistency, thus 
the connection to real time
seems to be a worthless --- though costly --- guarantee.

\begin{figure}[t]
    \centering
    \begin{tikzpicture}

      \draw[->] (-0.3,0.5) node[left]{$p_1$} -- (4.4,0.5) ;
      \draw[->] (-0.3,0.0) node[left]{$p_0$} -- (4.4,0.0) ;

      \draw[->,very thick, colorA] (4.4,-0.5) -- (-0.3,1) node[left]{$l_X$};
      \draw[->,very thick, colorB] (4.4,1) -- (-0.3,-0.5) node[left]{$l_Y$};

      \draw[fill=colorA!10] (1.0,0.5) +(-0.8,-0.2) rectangle +(0.8,0.2) +(0,0) node{\footnotesize $X.\textsf{write}(1)$};
      \draw[fill=colorB!10] (3.0,0.5) +(-0.9,-0.2) rectangle +(0.9,0.2) +(0,0) node{\footnotesize $Y.\textsf{read}\rightarrow 0$};
      \draw[fill=colorB!10] (1.0,0) +(-0.8,-0.2) rectangle +(0.8,0.2) +(0,0) node{\footnotesize $Y.\textsf{write}(1)$};
      \draw[fill=colorA!10] (3.0,0) +(-0.9,-0.2) rectangle +(0.9,0.2) +(0,0) node{\footnotesize $X.\textsf{read}\rightarrow 0$};
    \end{tikzpicture}
    \caption{Sequential consistency is not composable: registers $X$ and $Y$ are both sequentially consistent but their composition is not.}
    \label{fig:composability}
\end{figure}
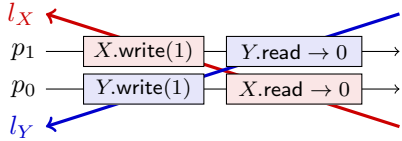

In this paper we focus on message-passing distributed systems. In such systems a shared memory is not a physical object; it has to be built using the underlying message-passing communication network. 
Several bounds have been found on the cost of sequential consistency and linearizability in synchronous distributed systems, where the transit time for any message is in a range $[d-u,d]$,
where $d$ and $u$ are called respectively the \emph{latency} and the \emph{uncertainty} of the network. Let us consider an implementation of a shared memory, and let $r$ (resp. $w$) be the 
worst case latency of any read (resp. write) operation. Lipton and Sandberg proved in \cite{lipton1988pram} that, if the algorithm implements a sequentially consistent memory, the inequality $r+w\geq d$ must hold.
Attiya and Welch refined this result in~\cite{attiya1994sequential}, proving that each kind of operations could have a 0-latency implementation for sequential consistency (though not both in the same implementation)
but that the time duration of both kinds of operations has to be at least linear in $u$ in order to ensure linearizability.

Therefore the following questions arise. Are there applications for which the lack of composability of sequential consistency is not a problem? For these applications, can we expect the same benefits in weaker message-passing models, such as asynchronous failure-prone systems, from using sequentially consistent objects rather than linearizable objects?

To illustrate the contributions of the paper, we also address a higher level operation: a snapshot operation \cite{Afek93} that allows to read in a single operation a whole set of registers. A sequentially consistent snapshot is such that the set of values it returns may be returned by a sequential execution. This operation is very useful as it has been proved \cite{Afek93} that linearizable snapshots can be wait-free implemented from single-writer/multi-reader registers. Indeed, assuming a snapshot operation does not bring any additional power with respect to shared registers. Of course this induces an additional cost: the best known simulation needs $O(n\log n)$ basic read/write operations to implement each of the snapshot operations and the associated update operation \cite{AttiyaR98}. Such an operation brings a programming comfort as it reduces the ``noise'' introduced by asynchrony and failures \cite{G98} and is particularly used in round-based computations \cite{Gafni98} we consider for the study of the composability of sequential consistency.

\paragraph{Contributions.} This paper has three major contributions. (1) It identifies two contexts that can benefit from the use of sequential consistency: round-based algorithms that use a different shared object for each round, and asynchronous shared-memory
systems, where programs can not differentiate a sequentially consistent memory from a linearizable memory. 
(2) It proposes an implementation of a sequentially consistent memory where waiting is only required when a write is immediately followed by a read. 
This extends the result presented in~\cite{attiya1994sequential}, which only applies to synchronous failure-free systems, to failure-prone asynchronous systems. 
(3) The proposed algorithm also implements a sequentially consistent snapshot operation the cost of which compares very favorably with the best existing linearizable implementation to our knowledge (the stacking of the snapshot algorithm of Attiya and Rachman \cite{AttiyaR98} over the ABD simulation of linearizable registers).

\paragraph{Outline.}
The remainder of this article is organized as follows. In Section~\ref{sec:composition}, we define more formally sequential consistency,
and we present special contexts in which it becomes composable. Then, in Section~\ref{sec:implementation}, we present our implementation 
of shared memory and study its complexity. Finally, Section~\ref{sec:conclusion} concludes the paper.

%%%%%%%%%%%%%%%%%%%%
\section{Sequential Consistency and Composability}\label{sec:composition}
%%%%%%%%%%%%%%%%%%%%

%%%%%%%%%%%%%%%%%%%%
\subsection{Definitions}
%%%%%%%%%%%%%%%%%%%%

In this section we recall the definitions of the most important notions we discuss in this paper: 
two consistency criteria, sequential consistency ($SC$, Def.~\ref{def:SC}, \cite{lamport1979make}) and linearizability ($L$, Def.~\ref{def:Lin}, \cite{herlihy1990linearizability}), 
as well as composability (Def.~\ref{def:comp}). A consistency criterion associates a set of admitted \emph{histories}
to the \emph{sequential specification} of each given object. A history is a representation of an execution. It contains a set of operations, 
that are partially ordered according to the sequential order of each process, called \emph{process order}. 
A sequential specification 
is a language, i.e. a set of sequential (finite and infinite) words. For a consistency criterion $C$ and a sequential specification $T$, 
we say that an algorithm implements a $C(T)$-consistent object if all its executions can be modelled by a history that belongs to $C(T)$, 
that contains all returned operations and only invoked operations. 
Note that this implies that if a process crashes during an operation, then the operation will appear in the history as if it was complete or as if it never took place at all.

\begin{definition}[Linear extension]\label{def:lin_order}
Let $H$ be a history and $T$ be a sequential specification. A \emph{linear extension} $\le$ is a total order on all the operations of $H$, that contains the process order, and such that each event $e$ has a finite past $\{e' : e'\le e\}$ according to the total order. 
\end{definition}

\begin{definition}[Sequential Consistency]\label{def:SC}
    Let $H$ be a history and $T$ be a sequential specification. The history $H$ is \emph{sequentially consistent} regarding $T$, denoted $H\in SC(T)$, if there exists a linear extension $\le$ such that the word composed of all the operations of $H$ ordered by $\le$ belongs to $T$. 
\end{definition}

\begin{definition}[Linearizability]\label{def:Lin}
    Let $H$ be a history and $T$ be a sequential specification. The history $H$ is \emph{linearizable} regarding $T$, 
    denoted $H\in L(T)$, if there exists a linear extension $\le$ such that 
    (1) for two operations $a$ and $b$, if the end of $a$ precedes the beginning of $b$ in real time, then $a\le b$ and 
    (2) the word formed of all the operations of $H$ ordered by $\le$ belongs to $T$. 
\end{definition}

Let $T_1$ and $T_2$ be two sequential specifications. We define the \emph{composition} of $T_1$ and $T_2$, denoted by $T_1\times T_2$, 
as the set of all the interleaved sequences of a word from $T_1$ and a word from $T_2$. An interleaved sequence of two words $l_1$ and $l_2$ is
a word composed of the disjoint union of all the letters of $l_1$ and $l_2$, that appear in the same order as they appear in $l_1$ and $l_2$. 
For example, the words $ab$ and $cd$ have six interleaved sequences: $abcd$, $acbd$, $acdb$, $cabd$, $cadb$ and $cdab$. 

A consistency criterion $C$ is composable (Def.~\ref{def:comp}) if the composition of a $C(T_1)$-consistent object and a $C(T_2)$-consistent object
is a $C(T_1\times T_2)$-consistent object. Linearizability is composable, and sequential consistency is not. 

\begin{definition}[Composability]\label{def:comp}
For a history $H$ and a sequential specification $T$, let us denote by $H_{T}$ the sub-history of $H$ that only contains the operations belonging to $T$. 

A consistency criterion $C$ is \emph{composable} if, for all sequential specifications $T_1$ and $T_2$ and all histories $H$ containing only events 
on $T_1$ and $T_2$, $(H_{T_1} \in C(T_1) \text{ and } H_{T_2} \in C(T_2))$ imply $H \in C(T_1\times T_2)$.
\end{definition}

\subsection{From Linearizability to Sequential Consistency}

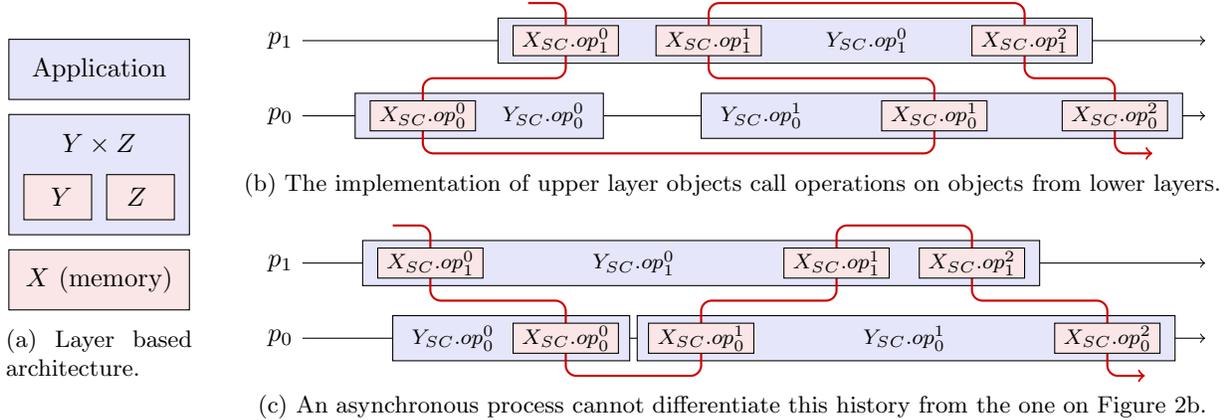
\begin{figure}[t]
  \begin{subfigure}{0.15\textwidth}
    \centering
    \begin{tikzpicture}
      
      \draw[fill=colorB!10] (0.3,2.8) rectangle (2.7,3.6) ;
      \draw[fill=colorB!10] (0.3,1) rectangle (2.7,2.6) ;
      \draw[fill=colorA!10] (0.5,1.2) rectangle (1.4,1.8) ;
      \draw[fill=colorA!10] (1.6,1.2) rectangle (2.5,1.8) ;
      \draw[fill=colorA!10] (0.3,0) rectangle (2.7,0.8) ;

      \draw (1.5,3.2) node{Application};
      \draw (1.5,2.2) node{$Y \times Z$};
      \draw (1,1.5) node{$Y$};
      \draw (2,1.5) node{$Z$};
      \draw (1.5,0.4) node{$X$ (memory)};

    \end{tikzpicture}
  \caption{Layer based architecture.}
  \label{fig:lin:archi}
  \end{subfigure}
  \hspace{\fill}
  \begin{minipage}{0.83\textwidth} 
  \begin{subfigure}{\textwidth}
    \centering
    \begin{tikzpicture}
      
      \draw[->] (-0.5,1) node[left]{$p_1$} -- (11.5,1);
      \draw[->] (-0.5,0) node[left]{$p_0$} -- (11.5,0);

      \draw[fill=colorB!10] (0,1) +(2.1,-0.3) rectangle +(10,0.3) +(7,0) node{\footnotesize $Y_{SC}.op_1^0$};

      \draw[fill=colorB!10] (0,0) +(0.2,-0.3) rectangle +(3.5,0.3) +(2.7,0) node{\footnotesize $Y_{SC}.op_0^0$};
      \draw[fill=colorB!10] (0,0) +(4.8,-0.3) rectangle +(11.2,0.3) +(5.6,0) node{\footnotesize $Y_{SC}.op_0^1$};

      \draw[->,colorA,rounded corners, thick]
      (2.5,1.5) -- (3,1.5) -- (3,0.5) --
      (1.1,0.5) -- (1.1,-0.5) --
      (7.9,-0.5) -- (7.9,0.5) --
      (4.9,0.5)  -- (4.9,1.5) --
      (9.1,1.5) -- (9.1,0.5)  --
      (10.3,0.5) -- (10.3,-0.5) -- (10.8,-0.5);

      \draw[fill=colorA!10] (3.0,1) +(-0.7,-0.2) rectangle +(0.7,0.2) +(0,0) node{\footnotesize $X_{SC}.op_1^0$};
      \draw[fill=colorA!10] (4.9,1) +(-0.7,-0.2) rectangle +(0.7,0.2) +(0,0) node{\footnotesize $X_{SC}.op_1^1$};
      \draw[fill=colorA!10] (9.1,1) +(-0.7,-0.2) rectangle +(0.7,0.2) +(0,0) node{\footnotesize $X_{SC}.op_1^2$};

      \draw[fill=colorA!10] (1.1,0) +(-0.7,-0.2) rectangle +(0.7,0.2) +(0,0) node{\footnotesize $X_{SC}.op_0^0$};
      \draw[fill=colorA!10] (7.9,0) +(-0.7,-0.2) rectangle +(0.7,0.2) +(0,0) node{\footnotesize $X_{SC}.op_0^1$};
      \draw[fill=colorA!10] (10.3,0) +(-0.7,-0.2) rectangle +(0.7,0.2) +(0,0) node{\footnotesize $X_{SC}.op_0^2$};
      
    \end{tikzpicture}
  \caption{The implementation of upper layer objects call operations on objects from lower layers.}
  \label{fig:lin:histSC}
  \end{subfigure}

  \vspace{3mm}

  \begin{subfigure}{\textwidth}
    \centering
    \begin{tikzpicture}
      
      \draw[->] (-0.5,1) node[left]{$p_1$} -- (11.5,1);
      \draw[->] (-0.5,0) node[left]{$p_0$} -- (11.5,0);

      \draw[fill=colorB!10] (0,1) +(0.3,-0.3) rectangle +(9.3,0.3) +(3.9,0) node{\footnotesize $Y_{SC}.op_1^0$};

      \draw[fill=colorB!10] (0,0) +(0.7,-0.3) rectangle +(3.85,0.3) +(1.5,0) node{\footnotesize $Y_{SC}.op_0^0$};
      \draw[fill=colorB!10] (0,0) +(3.95,-0.3) rectangle +(11.1,0.3) +(7.5,0) node{\footnotesize $Y_{SC}.op_0^1$};

      \draw[->,colorA,rounded corners, thick]
      (0.7,1.5) -- (1.2,1.5) -- (1.2,0.5) --
      (3.0,0.5) -- (3.0,-0.5) --
      (4.8,-0.5) -- (4.8,0.5) --
      (6.6,0.5)  -- (6.6,1.5) --
      (8.4,1.5) -- (8.4,0.5)  --
      (10.2,0.5) -- (10.2,-0.5) -- (10.7,-0.5);

      \draw[fill=colorA!10] (1.2,1) +(-0.7,-0.2) rectangle +(0.7,0.2) +(0,0) node{\footnotesize $X_{SC}.op_1^0$};
      \draw[fill=colorA!10] (6.6,1) +(-0.7,-0.2) rectangle +(0.7,0.2) +(0,0) node{\footnotesize $X_{SC}.op_1^1$};
      \draw[fill=colorA!10] (8.4,1) +(-0.7,-0.2) rectangle +(0.7,0.2) +(0,0) node{\footnotesize $X_{SC}.op_1^2$};

      \draw[fill=colorA!10] (3.0,0) +(-0.7,-0.2) rectangle +(0.7,0.2) +(0,0) node{\footnotesize $X_{SC}.op_0^0$};
      \draw[fill=colorA!10] (4.8,0) +(-0.7,-0.2) rectangle +(0.7,0.2) +(0,0) node{\footnotesize $X_{SC}.op_0^1$};
      \draw[fill=colorA!10] (10.2,0) +(-0.7,-0.2) rectangle +(0.7,0.2) +(0,0) node{\footnotesize $X_{SC}.op_0^2$};
      
    \end{tikzpicture}
  \caption{An asynchronous process cannot differentiate this history from the one on Figure~\ref{fig:lin:histSC}.}
  \label{fig:lin:histL}
  \end{subfigure}
  \end{minipage}
  \caption{In layer based program architecture running on asynchronous systems, local clocks of different processes can be distorted such that it 
  is impossible to differentiate a sequentially consistent execution from a linearizable execution.}
  \label{fig:lin}
\end{figure}

Software developers usually abstract the complexity of their system gradually, which results in a layered software architecture: 
at the top level, an application is built on top of several objects specific to the application, 
themselves built on top of lower levels. Such an architecture is represented in Fig.~\ref{fig:lin:archi}.
The lowest layer usually consists of one or several objects provided by the system itself, typically a shared memory. 
The system can ensure sequential consistency globally on all the provided objects, therefore composability is not required for this level.
Proposition~\ref{prop:lin_SC} expresses the fact that, in asynchronous systems, replacing a linearizable object by a sequentially consistent one 
does not affect the correctness of the programs running on it circumventing the non composability of sequential consistency.
This result may have an impact on parallel architectures, such as modern multi-core processors and, to a higher extent, 
high performance supercomputers, for which the communication with a linearizable central shared memory is very costly, 
and weak memory models such as cache consistency~\cite{goodman1991cache} make the writing of programs tough. 

\begin{proposition}\label{prop:lin_SC}
Let $A$ be an algorithm that implements an $SC(Y)$-consistent object when it is executed on an asynchronous system providing an $L(X)$-consistent object. Then $A$ also implements an $SC(Y)$-consistent object when it is executed in an asynchronous system providing an $SC(X)$-consistent object.
\end{proposition}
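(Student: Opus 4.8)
\emph{Proof plan.} The idea is that, on an asynchronous system, a run of $A$ on top of an $SC(X)$-consistent object can be \emph{re-timed} into a run of $A$ on top of an $L(X)$-consistent object that $A$ cannot tell apart from the original --- exactly the phenomenon pictured in Fig.~\ref{fig:lin:histSC} versus Fig.~\ref{fig:lin:histL}. Since the two runs induce the same history on $Y$, the claim follows by applying the hypothesis to the re-timed run.

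Concretely, I would first fix an arbitrary execution $E$ of $A$ over an $SC(X)$-consistent object, let $H$ be the history it induces on $Y$, and let $H_X$ be the history of the operations that $A$ invokes on the underlying object during $E$. Since that object is $SC(X)$-consistent, $H_X \in SC(X)$, so Definition~\ref{def:SC} provides a linear extension $\le_X$ of $H_X$ whose word belongs to $X$; by Definition~\ref{def:lin_order} every operation has a finite $\le_X$-past, so the operations of $H_X$ can be enumerated $o_1, o_2, \dots$ in $\le_X$-order.

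Next I would build an execution $E'$ with the same \emph{logical content} as $E$ --- the same local steps at each process, the same messages (if $A$ also uses message passing), the same operations invoked on the underlying object returning the same values, hence the same invocations and responses on $Y$ --- but with a fresh assignment of real times: the underlying-object operations are executed one at a time in the order $o_1, o_2, \dots$, with $o_j$ occupying a time slot $(j,\, j+\tfrac12)$, each $Y$-operation is stretched so that its interval covers exactly the slots of the underlying operations nested in it (a $Y$-operation with no nested call gets an arbitrarily short slot), and the remaining local events are placed so as to preserve each process's order. Two things then have to be checked: (i) $E'$ is a legal execution --- operations of one process stay sequential and each underlying call stays nested inside its $Y$-operation, which holds because $\le_X$ refines the process order and therefore never swaps two operations of the same process; and (ii) the underlying history $H'_X$ of $E'$ lies in $L(X)$ --- its real-time precedence relation is exactly $\le_X$ (consecutive slots are disjoint and increasing), so $\le_X$ simultaneously witnesses the real-time condition of Definition~\ref{def:Lin} and carries the same word of $X$ as before. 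Asynchrony is what makes this re-timing admissible: $A$ reads no clock and any step or message delay is allowed, so $E'$ is a genuine run of $A$ over an $L(X)$-consistent object. Crashes cause no trouble: a crashed operation already appears in $H_X$ either as completed or as absent, and $E'$ keeps that choice.

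Finally, the hypothesis applied to $E'$ gives that the history $E'$ induces on $Y$ is in $SC(Y)$; but that history is $H$ itself (same invocations, same responses, same process order), so $H \in SC(Y)$. Since $E$ was arbitrary, $A$ implements an $SC(Y)$-consistent object over every $SC(X)$-consistent object. I expect the only delicate step to be the bookkeeping in (i): confirming that widening a $Y$-operation so as to span the $\le_X$-slots of its nested calls --- slots that need not be contiguous, since other processes' operations may fall between them --- never breaks process order or nesting, which is precisely what the fact that $\le_X$ extends the process order guarantees.
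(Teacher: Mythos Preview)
Your proposal is correct and follows essentially the same approach as the paper: take the $SC$ linear extension of the operations on $X$, re-time the execution so that those operations occur sequentially in non-overlapping slots in that order (making the underlying object trivially linearizable), invoke asynchrony to argue that $A$ cannot distinguish the re-timed run from the original, and conclude that the $Y$-history is unchanged and hence in $SC(Y)$ by hypothesis. If anything, you are slightly more explicit than the paper about the bookkeeping --- checking that the re-timing preserves process order and the nesting of $X$-calls inside $Y$-operations --- whereas the paper leaves this implicit in the phrase ``in the same pattern as in $H_{SC}$''.
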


\begin{proof}
  Let $A$ be an algorithm that implements an $SC(Y)$-consistent object when it is executed on 
  an asynchronous system providing an $L(X)$-consistent object. 
  
  Let us consider a history $H_{SC}$ obtained by the execution of $A$ in an asynchronous system providing a $SC(X)$-consistent object. 
  Such a history is depicted on Fig.~\ref{fig:lin:histSC}. The history $H_{SC}$ contains operations on $X$ (in red in Fig.~\ref{fig:lin:histSC}), 
  as well as on $Y$ (in blue in Fig.~\ref{fig:lin:histSC}). 

  We will now build another history $H_{L}$, in which the operations on $X$ are linearizable, and the operations on $Y$ consist in the same calls to operations on $X$. 
  Such a history is depicted on Fig.~\ref{fig:lin:histL}. The only difference between the histories on Fig.~\ref{fig:lin:histSC} and~\ref{fig:lin:histL} is the
  way the two processes experience time. As the system is asynchronous, it is impossible for them to distinguish them. 
  
  Let us enumerate all the operations made on $X$ in their linear extension $\le$ required for sequential consistency. 
  Now, we build the execution $H_L$ in which the $i^{\text{th}}$ operation on $X$ of $H_{SC}$ is called on an $L(X)$-consistent object
  at time $2i$ seconds and lasts for one second. As no two operations overlap, and the operations happen in the
  same order in $H_L$ and in the linearization of $H_{SC}$, $\le$ is the only linear extension accepted by linearizability.
  Therefore, all operations can return the same values in $H_L$ and in $H_{SC}$ (and they will if $X$ is deterministic). 
  Now let us assume all operations on $X$ in $H_L$ were called by algorithm $A$, in the same pattern as in $H_{SC}$. 
  When considering the operations on $Y$, $H_L$ is $SC(Y)$-consistent. Moreover, as $A$ works on asynchronous systems
  and the same values were returned by $X$ in $H_{SC}$ and in $H_{L}$, $A$ returns the same values in both histories. 
  Therefore, $H_{SC}$ is also $SC(Y)$-consistent. 
\end{proof}

An interesting point about Proposition~\ref{prop:lin_SC} is that it allows sequentially consistent --- but not linearizable ---
objects to be composable. Let $A_Y$ and $A_Z$ be two algorithms that implement $L(Y)$-consistent and $L(Z)$-consistent objects
when they are executed on an asynchronous system providing an $L(X)$-consistent object, like on Fig.~\ref{fig:lin:archi}.
As linearizability is stronger than sequential consistency, according to Proposition~\ref{prop:lin_SC}, executing $A_Y$ and $A_Z$
on an asynchronous system providing an $SC(X)$-consistent object would implement sequentially consistent --- yet not linearizable --- 
objects. However, in a system providing the linearizable object $X$, by composability of linearizability, 
the composition of $A_Y$ and $A_Z$ implements an $L(Y\times Z)$-consistent object. Therefore, by Proposition~\ref{prop:lin_SC} again, 
in a system providing the sequentially consistent object $X$, the composition also implements an $SC(Y\times Z)$-consistent object.
In this example, the sequentially consistent versions of $Y$ and $Z$ derive their composability from an anchor to a \emph{common time},
given by the sequentially consistent memory, that can differ from \emph{real time}, required by linearizability.

%%%%%%%%%%%%%%%%%%%%
\subsection{Round-Based Computations}\label{sec:round}
%%%%%%%%%%%%%%%%%%%%

Even at a single layer, a program can use several objects that are not composable, but that are used in a fashion so that the non-composability is invisible to the program. Let us illustrate this with round-based algorithms. The synchronous distributed computing model has been extensively studied and well-understood leading the researchers to try to offer the same comfort when dealing with asynchronous systems, hence the introduction of synchronizers \cite{Awerbuch85}. A synchronizer slices a computation into phases during which each process executes three steps: send/write, receive/read and then local computation. This model has been extended to failure prone systems in the round-by-round computing model \cite{Gafni98} and to the Heard-Of model \cite{CS09} among others. Such a model is particularly interesting when the termination of a given program is only eventual. Indeed, some problems are undecidable in failure prone purely asynchronous systems. In order to circumvent this impossibility, eventually or partially synchronous systems have been introduced \cite{DLS88}. In such systems the termination may hold only after some finite but unbounded time, and the algorithms are implemented by the means of a series of asynchronous rounds each using its own shared objects.

In the round-based computing model, the execution is sliced into a sequence of asynchronous rounds. During each round, a new data structure (usually a single-writer/multi-reader register per process) is created and it is the only shared object used to communicate during the round. 
At the end of the round, each process destroys its local accessor to the object, so that it can no more access it. Note that the rounds are asynchronous:
the different processes do not necessarily start and finish their rounds at the same time. Moreover, a process may not terminate a round, and keep accessing the same shared object forever or may crash during this round and stop executing. A round-based execution is illustrated in Fig.~\ref{fig:rounds:hist}.

In Proposition~\ref{prop:round}, we prove that sequentially consistent objects of different rounds behave well together:
as the ordering added between the operations of two different objects always follows the round numbering, that is 
consistent with the program order already contained in the linear extension of each object, the composition of 
all these objects cannot create loops (Figure~\ref{fig:rounds:hist}). 
Putting together this result and Proposition~\ref{prop:lin_SC}, all the algorithms 
that use a round-based computation model can benefit of any improvement on the implementation of an array of 
single-writer/multi-reader register that sacrifices linearizability for sequential consistency. Note that this remains true whatever is the data structure used during each round. The only constraint is that a sequentially consistent shared data structure can be accessed during a unique round. If each object is sequentially consistent then the whole execution is consistent. 

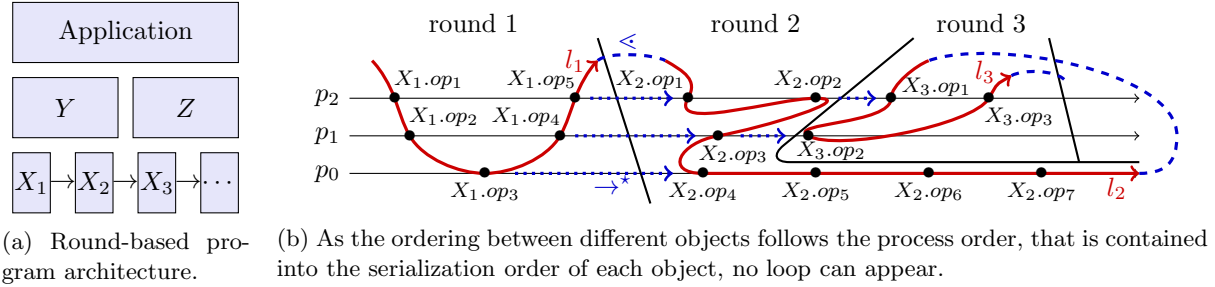
\begin{figure}[t]
  \hspace{\fill}
  \begin{subfigure}{0.2\textwidth}
    \centering
    \begin{tikzpicture}
      
      \draw[fill=colorB!10] (0,2) rectangle (3,2.8) ;
      \draw[fill=colorB!10] (0,1) rectangle (1.4,1.8) ;
      \draw[fill=colorB!10] (1.6,1) rectangle (3,1.8) ;
      \draw[fill=colorB!10] (0,0) rectangle (0.5,0.8) ;
      \draw[fill=colorB!10] (0.8333,0) rectangle (1.3333,0.8) ;
      \draw[fill=colorB!10] (1.666,0) rectangle (2.166,0.8) ;
      \draw[fill=colorB!10] (2.5,0) rectangle (3,0.8) ;

      \draw (1.5,2.4) node{Application};
      \draw (0.7,1.4) node{$Y$};
      \draw (2.3,1.4) node{$Z$};
      \draw (0.25,0.4) node{$X_1$};
      \draw (0.6666,0.4) node{$\rightarrow$};
      \draw (1.08333,0.4) node{$X_2$};
      \draw (1.5,0.4) node{$\rightarrow$};
      \draw (1.917,0.4) node{$X_3$};
      \draw (2.33333,0.4) node{$\rightarrow$};
      \draw (2.75,0.4) node{$\cdots$};

    \end{tikzpicture}
  \caption{Round-based program architecture.}
  \label{fig:rounds:archi}
  \end{subfigure}
  \hspace{\fill}
  \begin{subfigure}{0.75\textwidth}
    \centering
    \begin{tikzpicture}

      \draw (0.15,2) node{round 1};
      \draw (3.9 ,2) node{round 2};
      \draw (6.9 ,2) node{round 3};

      \draw[thick,rounded corners=10] (1.8,1.8) -- (2.5,-0.4);
      \draw[thick,rounded corners=10] (6,1.8) -- (4,0.15) -- (9,0.15) ;
      \draw[thick,rounded corners=10] (7.8,1.8) -- (8.2,0.15);

      \draw[->] (-1.5,1.0) node[left]{$p_2$} -- (9,1.0) ;
      \draw[->] (-1.5,0.5) node[left]{$p_1$} -- (9,0.5) ;
      \draw[->] (-1.5,0.0) node[left]{$p_0$} -- (9,0.0) ;

      \draw[->,colorA,very thick]
      (-1.2,1.5) to[out=-45,in=110,distance=5]
      (-0.9,1.0) to[out=-70,in=120,distance=5]
      (-0.7,0.5) to[out=-60,in=180,distance=10]
      (0.3,0.0)  to[out=0 ,in=-120,distance=10]
      (1.3,0.5)  to[out=60,in=-110,distance=5]
      (1.5,1.0)  to[out=70,in=-135,distance=5]
      (1.8,1.5);

      \draw[colorB,very thick,dashed]
      (1.8,1.5) to[out=45,in=150,distance=5]
      (2.7,1.5);

      \draw[->,colorA,very thick]
      (2.7,1.5) to[out=-30,in=60,distance=8]
      (3.0,1.0) to[out=-120,in=180,distance=15]
      (4.7,1.0)  to[out=0 ,in=10,distance=17]
      (3.4,0.5)  to[out=-170,in=180,distance=15]
      (3.2,0.0) to[out=0,in=180,distance=20]
      (4.2,0.0) -- (8.2,0.0) -- (9,0.0);

      \draw[colorB,very thick,dashed]
      (9,0) to[out=0,in=-90,distance=10]
      (9.5,.5) to[out=90,in=20,distance=25]
      (6.2,1.5);

      \draw[->,colorA,very thick]
      (6.2,1.5) to[out=-160,in=90,distance=5]
      (5.7,1.0) to[out=-90,in=150,distance=10]
      (4.6,0.5) to[out=-30,in=-120,distance=12]
      (7.0,1.0)  to[out=60,in=-150,distance=7]
      (7.3,1.3);
      
      \draw[colorB,very thick,dashed]
      (7.3,1.3) to[out=30,in=135,distance=5]
      (8,1.25);

      \draw[->, colorB,very thick,dotted] (1.7,1.0) -- (2.8,1.0);
      \draw[->, colorB,very thick,dotted] (1.5,0.5) -- (3.1,0.5);
      \draw[->, colorB,very thick,dotted] (0.7,0.0) -- (2.8,0.0);

      \draw[->, colorB,very thick,dotted] (5,1.0) -- (5.5,1.0);
      \draw[->, colorB,very thick,dotted] (3.7,0.5) -- (4.3,0.5);

      \draw[colorA] (1.5,1.5)  node{$l_1$};
      \draw[colorA] (8.7,-0.2)  node{$l_2$};
      \draw[colorA] (6.95,1.35)  node{$l_3$};
      \draw[colorB] (2.2,1.75)  node{$\lessdot$};
      \draw[colorB] (2,-0.15)  node{$\rightarrow^\star$};

      \draw (-0.9,1.0)node{$\bullet$} +(0.45,0) node[above]{\footnotesize $X_1.op_1$};
      \draw (-0.7,0.5)node{$\bullet$} +(0.45,0) node[above]{\footnotesize $X_1.op_2$};
      \draw (0.3,0.0) node{$\bullet$} node[below]{\footnotesize $X_1.op_3$};
      \draw (1.3,0.5) node{$\bullet$} +(-0.45,0) node[above]{\footnotesize $X_1.op_4$};
      \draw (1.5,1.0) node{$\bullet$} +(-0.45,0) node[above]{\footnotesize $X_1.op_5$};

      \draw (3.0,1.0) node{$\bullet$} +(-0.5,0) node[above]{\footnotesize $X_2.op_1$};
      \draw (4.7,1.0) node{$\bullet$} +(-0.1,0) node[above]{\footnotesize $X_2.op_2$};
      \draw (3.4,0.5) node{$\bullet$} +(0.2,0) node[below]{\footnotesize $X_2.op_3$};
      \draw (3.2,0.0) node{$\bullet$} node[below]{\footnotesize $X_2.op_4$};
      \draw (4.7,0.0) node{$\bullet$} node[below]{\footnotesize $X_2.op_5$};
      \draw (6.2,0.0) node{$\bullet$} node[below]{\footnotesize $X_2.op_6$};
      \draw (7.7,0.0) node{$\bullet$} node[below]{\footnotesize $X_2.op_7$};

      \draw (5.7,1.0) node{$\bullet$} +(0.6,-0.1) node[above]{\footnotesize $X_3.op_1$};
      \draw (4.6,0.5) node{$\bullet$} +(0.3,0.05) node[below]{\footnotesize $X_3.op_2$};
      \draw (7.0,1.0) node{$\bullet$} +(0.4,0) node[below]{\footnotesize $X_3.op_3$};
      
    \end{tikzpicture}
  \caption{As the ordering between different objects follows the process order, that is contained into the 
  serialization order of each object, no loop can appear.}
  \label{fig:rounds:hist}
  \end{subfigure}
  \hspace{\fill}
  \caption{The composition of sequentially consistent objects used in different rounds is sequentially consistent.}
  \label{fig:rounds}
\end{figure}

\begin{proposition}\label{prop:round}
  Let $(T_r)_{r\in\mathbb{N}}$ be a family of sequential specifications and $(X_r)_{r\in\mathbb{N}}$ be a 
  family of shared objects such that, for all $r$, $X_r$ is $SC(T_r)$-consistent. Let $H$ be a history
  that does not contain two operations $X_r.a$ and $X_{r'}.b$ with $r>r'$ such that $X_r.a$ precedes $X_{r'}.b$ in the process order. 
  Then $H$ is sequentially consistent with respect to the composition of all the $T_r$.
\end{proposition}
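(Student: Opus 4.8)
The plan is to build a single linear extension of $H$ for the composite specification $\prod_r T_r$ by stitching together the linear extensions guaranteed for each $X_r$ individually. First I would invoke Definition~\ref{def:SC}: for each $r$, since $X_r$ is $SC(T_r)$-consistent, there is a linear extension $\le_r$ of the sub-history $H_{T_r}$ such that the word of all operations on $X_r$ ordered by $\le_r$ belongs to $T_r$. The candidate order $\le$ on all of $H$ is then defined by: $X_r.a \le X_{r'}.b$ iff either $r < r'$, or $r = r'$ and $X_r.a \le_r X_r.b$. In other words, operations are grouped by round in increasing round order, and within a round they follow that round's own linear extension.

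Next I would check that $\le$ is genuinely a linear extension in the sense of Definition~\ref{def:lin_order}: it must be a total order, it must contain the process order, and every operation must have finite past. Totality is immediate from the construction. For the finite-past condition, an operation $X_r.a$ has in its past only the finitely-or-infinitely many operations of strictly smaller rounds together with its $\le_r$-predecessors within round $r$; one needs here that each round contains only finitely many operations preceding a given one and that only finitely many rounds precede $r$ — this follows because $\le_r$ has finite pasts and $r$ is a natural number, so the finitely many rounds $0,1,\dots,r-1$ each contribute a set that is finite up to the point we need; more carefully, the past of $X_r.a$ is contained in $\bigcup_{r'<r} H_{T_{r'}} \cup \{e : e \le_r X_r.a\}$, and since we only need \emph{each} event to have finite past and each lower round's contribution to the past of a \emph{fixed} event... — actually the cleanest argument is that the past of a fixed event $X_r.a$ equals a finite union (over $r' \le r$) of sets, each of which is finite because $\le_{r'}$ has finite pasts for its own events and we may restrict to those events of round $r'$ that can precede $X_r.a$; but since \emph{all} events of round $r' < r$ precede $X_r.a$, we do need each $H_{T_{r'}}$ to be finite, which need not hold. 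The correct fix is to observe that we only need each event to have finite past \emph{when it does}; for infinite executions the definition of linear extension in this paper already tolerates this, so I would follow the paper's convention and simply note the past of each event is the appropriate union, deferring the boundedness subtlety. The key point for \emph{containing the process order} is the hypothesis on $H$: if $X_r.a$ precedes $X_{r'}.b$ in process order, then by assumption $r \le r'$; if $r < r'$ then $X_r.a \le X_{r'}.b$ by construction, and if $r = r'$ then process order is contained in $\le_r$ (since $\le_r$ is a linear extension of $H_{T_r}$), so again $X_r.a \le X_{r'}.b$. Thus $\le$ contains process order.

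Finally I would verify that the word $w$ obtained by listing all operations of $H$ in $\le$-order lies in $\prod_r T_r$. By definition of the composition, $w \in \prod_r T_r$ iff $w$ is an interleaving of words $w_r \in T_r$, one per round; concretely, the subsequence of $w$ consisting of operations on $X_r$ must belong to $T_r$. But by construction of $\le$, that subsequence is exactly the word of operations on $X_r$ ordered by $\le_r$, which belongs to $T_r$ by choice of $\le_r$. Hence $w$ restricted to each component is in the right specification, so $w$ is a legal interleaving and $w \in \prod_r T_r$. By Definition~\ref{def:SC}, $H$ is sequentially consistent with respect to $\prod_r T_r$.

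I expect the only real obstacle to be the finite-past / well-foundedness bookkeeping for infinite executions with infinitely many rounds: one must argue that the concatenation ``round $0$, then round $1$, then $\dots$'' still yields an order in which every individual event has a finite past, which is only true because a fixed event lives in some round $r$ and hence has at most the operations of rounds $0,\dots,r$ before it — and within those finitely many rounds the finite-past property of each $\le_{r'}$ must be leveraged round by round. Everything else (totality, process-order inclusion via the round-monotonicity hypothesis, and the component-wise membership giving the interleaving) is essentially bookkeeping. The intuition, as Figure~\ref{fig:rounds:hist} suggests, is that the extra edges we add between distinct objects all point ``forward in round number,'' and since round number is non-decreasing along process order, these new edges never close a cycle with the edges already present in the per-round linear extensions.
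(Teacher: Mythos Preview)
Your concatenation order---put all of round $0$ first, then round $1$, and so on---is exactly the relation the paper calls $\lessdot$, and the paper explicitly points out that $\lessdot$ is \emph{not} in general a linear extension: ``an operation can have an infinite past if a process does not finish its round.'' You noticed this yourself (``we do need each $H_{T_{r'}}$ to be finite, which need not hold'') but then waved it away. It cannot be waved away. The round-based model in Section~\ref{sec:round} explicitly allows a process to ``keep accessing the same shared object forever,'' so $H_{T_0}$ may be infinite while some other process has already moved on to round $1$; every round-$1$ operation then has infinite $\le$-past, violating Definition~\ref{def:lin_order}, and the resulting ``word'' is not even of order type $\omega$, so it does not lie in any sequential specification.

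The paper's repair is the actual content of the proof. Instead of working with $\lessdot$ directly, it defines $\rightarrow$ as the union of process order and the per-round orders, takes its transitive closure $\rightarrow^\star$, and shows that $\rightarrow^\star$ (unlike $\lessdot$) has finite pasts: if some operation had infinite $\rightarrow^\star$-past, the $\lessdot$-least such operation would have only finitely many \emph{direct} $\rightarrow$-predecessors (finitely many by process order, finitely many within its own round by the finite-past property of $\le_r$), each with finite past, a contradiction. One then extends the partial order $\rightarrow^\star$ to a total order with finite pasts. The round-monotonicity hypothesis is used to show $\rightarrow\,\subseteq\,\lessdot$, which gives antisymmetry of $\rightarrow^\star$ and makes the minimality argument go through. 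Your verification that process order is respected and that the projection onto each $T_r$ is correct is fine; what is missing is precisely this detour through $\rightarrow^\star$ to get a well-founded total order.
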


\begin{proof}
  Let $(T_r)_{r\in\mathbb{N}}$ be a family of sequential specifications and $(X_r)_{r\in\mathbb{N}}$ be a 
  family of shared object such that, for all $r$, $X_r$ is $SC(T_r)$-consistent. Let $H$ be a history
  that does not contain two operations $X_r.a$ and $X_{r'}.b$ with $r>r'$ such that $X_r.a$ precedes $X_{r'}.b$ in the process order. 
  
  For each $X_r$, there exists a linearization $l_r$ that contains the operations on $X_r$ and respects $T_r$.
  For each operation $op$, let us denote by $op.r$ the index of the object $X_r$ on which it is made
  and by $op.i$ the number of operations that precede $op$ in the linearization $l_r$.
  Let us define two binary relations $\lessdot$ and $\rightarrow$ on the operations of $H$. For two operations $op$ and $op'$,
  $op \lessdot op'$ if $op.r < op'.r$, or $op.r = op'.r$ and $op.i \le op'.i$. Note that $\lessdot$ is the concatenation of all 
  the linear extensions, so it is a total order on all the operations of $H$, but it may not be a linear extension as an
  operation can have an infinite past if a process does not finish its round. For two operations $op$ and $op'$, $op \rightarrow op'$ if
  $op$ and $op'$ were done in that order by the same process, or $op.r = op'.r$ and $op.i \le op'.i$. Let $\rightarrow^\star$
  be the transitive closure of $\rightarrow$.

  Notice that, according to the round based model, $\rightarrow$ is contained into $\lessdot$, and so is $\rightarrow^\star$ because $\lessdot$ is transitive.
  The relation $\rightarrow^\star$ is transitive and reflexive by construction. Moreover, if $op\rightarrow^\star op' \rightarrow^\star op$, we have $op.r\le op'.r \le op.r$
  and therefore $op.i\le op'.i \le op.i$, so $op=op'$ (antisymmetry), which proves that $\rightarrow^\star$ is a partial order. Moreover, let us suppose that an
  operation contains an infinite past according to $\rightarrow^\star$. There is a smallest such operation, $op_{\min}$, according to $\lessdot$. The direct predecessors of
  $op_{\min}$ according to $\rightarrow$ are smaller than $op_{\min}$ according to $\lessdot$, so they have a finite past. Moreover, they precede $op_{\min}$
  either in the process order or in the linearization $l_{op.r}$, so there is a finite number of them. This is a contradiction, so all operations have a finite
  past according to $\rightarrow^\star$.
  It is possible to extend $\rightarrow^\star$ to a total order $\leq$ such that all operations have a finite past according to $\leq$. As $\leq$ contains
  the total orders defined by all the $l_r$, the execution of all the operations in the order $\leq$ respects the sequential specification of the composition
  of all the $X_r$.
\end{proof}

%%%%%%%%%%%%%%%%%%%%
\section{Implementation of a Sequentially Consistent Memory}\label{sec:implementation}
%%%%%%%%%%%%%%%%%%%%

In this section we will describe the computation model that we consider for the implementation of a sequentially consistent shared memory (Section~\ref{sec:model}). 
In Section~\ref{sec:memory} we will discuss the characteristics of such a memory and, finally, in Section~\ref{sec:algo} we will present the proposed implementation of the discussed data structure. 
Finally, in Section~\ref{sec:complexity} we discuss the complexity of the proposed implementation. 

%%%%%%%%%%%%%%%%%%%%
\subsection{Computation Model}
\label{sec:model}
%%%%%%%%%%%%%%%%%%%%

The computation system consists of a set $\Pi$ of $n$ sequential processes which are denoted $p_0, p_1, \ldots, p_{n-1}$. 
The processes are asynchronous, in the sense that they all proceed at their own speed, 
not upper bounded and unknown to all other processes. 

Among these $n$ processes, up to $t$ may crash (halt prematurely) but otherwise execute correctly the algorithm until the moment of their crash. 
We call a process \emph{faulty} if it crashes, otherwise it is called \emph{correct} or \emph{non-faulty}. 
In the rest of the paper we will consider the above model restricted to the case $t<\frac{n}{2}$.

The processes communicate with each other by sending and receiving messages through a complete network of bidirectional communication channels. 
This means that a process can directly communicate with any other process, including itself ($p_i$ receives its own messages instantaneously), 
and can identify the sender of the message it received. Each process is equipped with two operations: \textbf{send} and \textbf{receive}. 

The channels are reliable (no losses, no creation, no duplication, no alteration of messages) and asynchronous 
(finite time needed for a message to be transmitted but there is no upper bound). We also assume 
the channels are FIFO: if $p_i$ sends two messages to $p_j$, $p_j$ will receive them in the order they were sent.
As stated in \cite{birman1987reliable}, FIFO channels can always be implemented on top of non-FIFO channels. 
Therefore, this assumption does not bring additional computational power to the model, but it allows us to simplify the writing of the algorithm. Process $p_i$ can also use the macro-operation \textbf{FIFO broadcast}, that can be seen as a multi-send that sends a message to all processes, including itself. Hence, if a faulty process crashes during the broadcast operation some processes may receive the message while others may not, otherwise all correct processes will eventually receive the message.

%%%%%%%%%%%%%%%%%%%%
\subsection{Single-Writer/Multi-Reader Registers and Snapshot Memory}
\label{sec:memory}
%%%%%%%%%%%%%%%%%%%%

The shared memory considered in this paper, called a \emph{snapshot memory}, consists of an array of shared registers denoted $\REG[1..n]$. Each entry $\REG[i]$ represents a single-writer/multi-reader (SWMR) register. When process $p_i$ invokes $\REG.\AlgoWrite(v)$, the value $\Algov$ is written into the SWMR register $\REG[i]$ associated with process $p_i$. Differently, any process $p_i$ can read the whole array $\REG$ by invoking a single operation namely $\REG.\AlgoSnap()$. According to the sequential specification of the snapshot memory, $\REG.\AlgoSnap()$ returns an array containing the most recent value written by each process or the initial default value if no value is written on some register. 
Concurrency is possible between snapshot and writing operations, as soon as the considered consistency criterion, namely linearizability or sequential consistency, is respected. Informally in a sequentially consistent snapshot memory, each snapshot operation must return the last value written by the process that initiated it,
and for any pair of snapshot operations, one must return values at least as recent as the other for all registers.

Compared to read and write operations, the snapshot operation is a higher level abstraction introduced in \cite{Afek93} that eases program design without bringing additional power with respect to shared registers. Of course this induces an additional cost: the best known simulation, above SWMR registers proposed in \cite{AttiyaR98}, needs $O(n\log n)$ basic read/write operations to implement each of the snapshot and the associated update operations.

Since the seminal paper \cite{attiya1995sharing} that proposed the so-called ABD simulation that emulates a linearizable shared memory over a message-passing distributed system, most of the effort has been put on the shared memory model given that a simple stacking allows to translate any shared memory-based result to the message-passing system model. 
Several implementations of linearizable snapshot have been proposed in the literature some works consider variants of snapshot (e.g. immediate snapshot \cite{BorowskyG92}, weak-snapshot \cite{dwork1992time}, one scanner \cite{KirousisST94}) others consider that special constructions such as test-and-set (T\&S) \cite{AHR95} or load-link/store-conditional (LL/SC) \cite{RST01} are available, the goal being to enhance time and space efficiency.
In this paper, we propose the first message-passing sequentially consistent (not linearizable) snapshot memory implementation directly over a message-passing system (and consequently the first sequentially consistent array of SWMR registers), as traditional read and write operations can be immediately deduced from snapshot and update with no additional cost.

%%%%%%%%%%%%%%%%%%%%
\subsection{The Proposed Algorithm}
\label{sec:algo}
%%%%%%%%%%%%%%%%%%%%

\begin{algorithm}[p]
  \tcc{Local variable initialization}
  $\KwX_i  \leftarrow [0,\dots, 0]$\label{al:SCS:varX}\tcp*{$\KwX_i \in \mathbb{N}^{n}$: $\KwX_i[j]$ is the last validated value written by $p_j$}
  $\KwVC_i \leftarrow [0,\dots, 0]$\label{al:SCS:varVC}\tcp*{$\KwVC_i \in \mathbb{N}^{n}$: $\KwVC_i[j]$ is the stamp given by $p_j$ to value $\KwX_i[j]$}
  $\KwSC_i \leftarrow 0$\label{al:SCS:varSC}\tcp*{$\KwSC_i \in \mathbb{N}$: used to stamp all the updates}
  $\KwG_i  \leftarrow \emptyset$\label{al:SCS:varG}\tcp*{$\KwG_i \subset \mathbb{N}^{3+n}$: contains a $g = (g.\KwGV, g.\KwGK, g.\KwGT, g.\KwGCL)$ per non-val. update of $g.\KwGV$ by $p_{g.\KwGK}$}
  $\KwV_i  \leftarrow \bot$\label{al:SCS:varV}\tcp*{$\KwV_i \in \mathbb{N}\cup \{\bot\}$: stores updates that have not yet been proposed to validation}

  \nonl\hrulefill\\
  \nonl\SubAlgo{\Op $\KwWrite(\Kwv)$  \tcc*[h]{$\Kwv\in \mathbb{N}$: written value; no return value}}{
    \uIf(\label{al:SCS:w1}\tcp*[f]{no non-validated update by $p_i$}){$\forall g\in \KwG_i: g.\KwGK\neq i$} {
%      $\KwSC_i\leftarrow \KwSC_i + 1$;\label{al:SCS:w2}
      $\KwSC_i\text{++}$\; \Broadcast $\KwM(\Kwv,i,\KwSC_i,\KwSC_i)$\;\label{al:SCS:w3}
    }
    \lElse(\label{al:SCS:w4}\tcp*[f]{postpone the update}){$\KwV_i \leftarrow \Kwv$}
  }

  \nonl\hrulefill\\
  \nonl\SubAlgo{\Op $\KwSnap()$  \tcc*[h]{return type: $\mathbb{N}^n$}}{
    \Wait{$\KwV_i=\bot \land \forall g\in \KwG_i: g.\KwGK\neq i$} \label{al:SCS:r1}\tcp*[r]{make sure $p_i$'s updates are validated}
    \Return $\KwX_i$\label{al:SCS:r2}\;
  }

  \nonl\hrulefill\\
  \nonl\SubAlgo{\RecA $\KwM(\Kwv,\Kwk,\Kwt,\Kwcl)$ \RecB $p_j$} {
    \tcc{$\Kwv\in \mathbb{N}$: written value, $\Kwk\in \mathbb{N}$: writer id, $\Kwt\in \mathbb{N}$: stamp by $p_{\Kwk}$, $\Kwcl\in \mathbb{N}$: stamp by $p_j$}
    \If(\label{al:SCS:mA1}\tcp*[f]{update not validated yet}){$\Kwt>\KwVC_i[\Kwk]$} {
      \eIf(\label{al:SCS:mA2}\tcp*[f]{update already known}){$\exists g\in \KwG_i: g.\KwGK = \Kwk \land g.\KwGT = \Kwt$}{
        $g.\KwGCL[j] \leftarrow \Kwcl$\;\label{al:SCS:mA3}
      }(\label{al:SCS:mA4}\tcp*[f]{first message for this update}){
        \If{$\Kwk\neq i$}{
            $\KwSC_i\text{++}$\; \Broadcast $\KwM(\Kwv, \Kwk, \Kwt, \KwSC_i)$\label{al:SCS:mA6}\tcp*[r]{forward with own stamp}
        }
        \textbf{var} $g \leftarrow \left(g.\KwGV = \Kwv, g.\KwGK=\Kwk, g.\KwGT=\Kwt, g.\KwGCL=[\infty,\dots,\infty]\right)$\label{al:SCS:mA7}\;
        $g.\KwGCL[j] \leftarrow \Kwcl$\label{al:SCS:mA8}\;
        $\KwG_i \leftarrow \KwG_i\cup \{g\}$\label{al:SCS:mA9}\tcp*[r]{create an entry in $\KwG_i$ for the update}
      }
    }
    \textbf{var} $G' = \{g \in \KwG_i: |\{l : g'.\KwGCL[l] < \infty\}| > \frac{n}{2} \} $\label{al:SCS:mB1}\tcp*[r]{$G'$ contains updates that can be validated}
    \lWhile(\label{al:SCS:mB2}){$\exists g\in \KwG_i\setminus G', g'\in G' : |\{l : g'.\KwGCL[l] < g.\KwGCL[l] \}| \neq \frac{n}{2}$}{
      $G'\leftarrow G'\setminus \{g'\}$\label{al:SCS:mB3}}
    $\KwG_i \leftarrow \KwG_i \setminus G'$\label{al:SCS:mB4}\tcp*[r]{validate updates of $G'$}
    \For(\label{al:SCS:mB5}){$g\in G'$}{
      \lIf{$\KwVC_i[g.\KwGK]<g.\KwGT$}{$\KwVC_i[g.\KwGK]=g.\KwGT;$ $\KwX_i[g.\KwGK]=g.\KwV$}\label{al:SCS:mB6}
    }
    \If(\label{al:SCS:mC1}\tcp*[f]{start validation process for postponed update if any}){$\KwV_i\neq\bot \land \forall g\in \KwG_i: g.\KwGK\neq i$}{
%      $\KwSC_i\leftarrow \KwSC_i + 1$\label{al:SCS:mC2}\;
      $\KwSC_i\text{++}$\;
      \Broadcast $\KwM(\KwV_i,i,\KwSC_i,\KwSC_i)$\label{al:SCS:mC3}\;
      $\KwV_i \leftarrow \bot$\label{al:SCS:mC4}\;
    }
  }
  \caption{Implementation of a sequentially consistent memory (code for $p_i$)}
  \label{algo:SCS}
\end{algorithm}

%DESCRIPTION OF THE ALGORITHM

Algorithm~\ref{algo:SCS} proposes an implementation of the sequentially consistent snapshot memory data structure presented in Section~\ref{sec:memory}. Process $p_i$ can write a value $\Algov$ in its own register $\REG[i]$
by calling the operation $REG.\AlgoWrite(v)$, implemented by the lines~\ref{al:SCS:w1}-\ref{al:SCS:w4}. It can also call the operation $REG.\AlgoSnap()$, implemented by the lines~\ref{al:SCS:r1}-\ref{al:SCS:r2}.
Roughly speaking, the principle of this algorithm is to maintain, on each process, a local view of the object that reflects a set of \emph{validated} update operations. 
To do so, when a value is written, all processes label it with their own timestamp. The order in which processes timestamp two different update operations define a \emph{dependency relation} between these operations. For two operations $a$ and $b$, if $b$ depends on $a$, then $p_i$ cannot validate $b$ before $a$. 

More precisely, each process $p_i$ maintains five local variables:
\begin{itemize}
\item $\AlgoX_i \in \mathbb{N}^n$ represents the array of most recent validated values written on each register.
\item $\AlgoVC_i \in \mathbb{N}^n$ represents the timestamps associated with the values stored in $\AlgoX_i$, labelled by the process that initiated them.
\item $\AlgoSC_i\in \mathbb{N}$ is an integer clock used by $p_i$ to timestamp all the update operations. $\AlgoSC_i$ is incremented each time a message is sent,
    %using the C-like notation $\AlgoSC_i\text{++}$
    which ensures all timestamps from the same process are different. 
\item $\AlgoG_i \subset \mathbb{N}^{3+n}$ encodes the dependencies between the update
    operations that have not been validated yet, as they are known by $p_i$. An element $\Algog\in \AlgoG_i$, of the form $(\Algog.\AlgoGV, \Algog.\AlgoGK, \Algog.\AlgoGT, \Algog.\AlgoGCL)$, represents
    the update operation of value $\Algog.\AlgoGV$ by process $p_{\Algog.\AlgoGK}$ labelled by process $p_{\Algog.\AlgoGK}$ with timestamp $\Algog.\AlgoGT$. For all $0\leq j<n$,
    $\Algog.\AlgoGCL[j]$ contains the timestamp associated by $p_j$ if it is known by $p_i$, and $\infty$ otherwise. 
    
    All updates of a history can be uniquely represented by a pair of integers $(k, t)$, where $p_k$ is the process that invoked it, and $t$ is the timestamp associated 
    to this update by $p_k$. Considering a history and a process $p_i$, we define the dependency relation $\rightarrow_i$ on pairs of integers $(k, t)$, by 
    $(k, t)\rightarrow_i (k', t')$ if for all $\Algog, \Algog'$ ever inserted in $G_i$ with
    $(\Algog.\AlgoGK,\Algog.\AlgoGT)=(k,t)$, $(\Algog'.\AlgoGK,\Algog'.\AlgoGT)=(k',t')$, 
    we have $|\{j : \Algog'.\AlgoGCL[j] < \Algog.\AlgoGCL[j] \}| \le \frac{n}{2}$ (i.e. the dependency does not exist if $p_i$ knows that a majority of processes have seen the first update before the second).
    Let $\rightarrow_i^\star$ denote the transitive closure of $\rightarrow_i$.
\item $\AlgoV_i\in \mathbb{N}\cup\{\bot\}$ is a buffer register used to store a value written while the previous one is not yet validated. This is necessary for validation (see below).
\end{itemize}

The key of the algorithm is to ensure the inclusion between sets of validated updates on any two processes at any time. Remark that it is not always necessary 
to order all pairs of update operations to implement a sequentially consistent snapshot memory: for example, two update operations on different registers commute. 
Therefore, instead of validating both operations on all processes in the exact same order (which requires Consensus), we can validate them at the same 
time to prevent a snapshot to occur between them. Therefore, it is sufficient to ensure that, for all pairs of update operations, there is a dependency 
agreed by all processes (possibly in both directions). This property is expressed by Lemma~\ref{lemma:safety} from Section~\ref{sec:proof}.

This is done by the mean of messages of the form $\AlgoM(\Algov,\Algok,\Algot,\Algocl)$ containing four integers: $\Algov$ the value written, $\Algok$ the identifier of the process that initiated the update, $\Algot$ the timestamp given by $p_{\Algok}$ and $\Algocl$ the timestamp given by the process that sent this message. Timestamps 
of successive messages sent by $p_i$ are unique and totally ordered, thanks to variable $\AlgoSC_i$, that is incremented each time a message is sent by $p_i$.
When process $p_i$ wants to submit a value $\Algov$ for validation, it FIFO-broadcasts a message $\AlgoM(\Algov,i,\AlgoSC_i,\AlgoSC_i)$ (lines~\ref{al:SCS:w3} and~\ref{al:SCS:mC3}). 
When $p_i$ receives a message $\AlgoM(\Algov,\Algok,\Algot,\Algocl)$, three cases are possible. If $p_i$ has already validated the corresponding update ($\Algot > \AlgoVC_i[\Algok]$),
the message is simply ignored. Otherwise, if it is the first time $p_i$ receives a message concerning this update ($\AlgoG_i$ does not contain any piece of information concerning it), 
it FIFO-broadcasts a message with its own timestamp and adds a new entry $\Algog\in \AlgoG_i$. Whether it is its first message or not, $p_i$ records the timestamp $\Algocl$, given by $p_j$, in $\Algog.\AlgoGCL[j]$ (lines~\ref{al:SCS:mA3} or~\ref{al:SCS:mA8}). Note that we cannot update $\Algog.\AlgoGCL[\Algok]$ at this point, as the broadcast is not causal: if $p_i$ did so, it could miss dependencies imposed by the order in which $p_{\Algok}$ saw concurrent updates. Then, $p_i$ tries to validate update operations: $p_i$ can validate an operation $a$ if it has received messages from a majority of processes, and there is no operation $b\rightarrow_i^\star a$ that cannot be validated. For that, it creates the set $G'$ that initially contains all the operations that have received enough messages, and removes all operations with unvalidatable dependencies from it (lines~\ref{al:SCS:mB1}-\ref{al:SCS:mB3}), and then updates $\AlgoX_i$ and $\AlgoVC_i$ with the most recent validated values (lines~\ref{al:SCS:mB4}-\ref{al:SCS:mB6}).

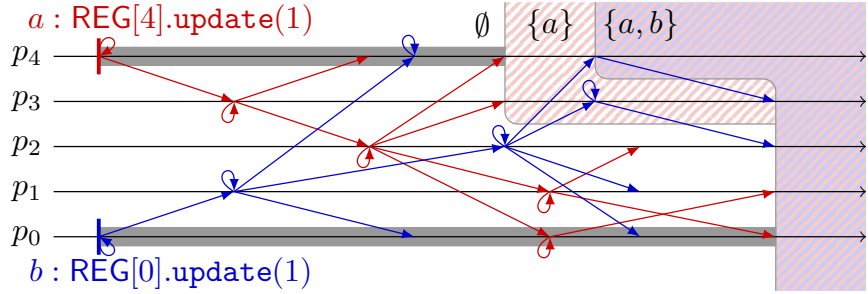
\begin{figure}[t]
%\begin{subfigure}[b]{0.35\textwidth}
    \centering
\scalebox{1.2}{
\begin{tikzpicture}

%\draw[step=1cm,gray,very thin] (-2,-1) grid (8,3);

\fill[colorB!20] (5,2.6) {[rounded corners=5] -- (5,1.75) -- (7,1.75)} -- (7,-0.6) -- (8,-0.6) -- (8,2.6) -- cycle;
\fill[pattern=stripes, pattern color=colorA!20] (4,2.6) {[rounded corners=5] -- (4,1.25)} -- (7,1.25) -- (7,-0.6) -- (8,-0.6) -- (8,2.6) -- cycle;
\draw[black!40,rounded corners] (5,2.6) -- (5,1.75) -- (7,1.75) -- (7,-0.6);
\draw[black!40,rounded corners] (4,2.6) -- (4,1.25) -- (7,1.25);
\draw[draw=black!40,fill=black!40] (-0.5,-0.1) rectangle (7,0.1);
\draw[draw=black!40,fill=black!40] (-0.5,1.9) rectangle (4,2.1);

\draw[->] (-1,2.0) node[left]{$p_4$} -- (8,2.0);
\draw[->] (-1,1.5) node[left]{$p_3$} -- (8,1.5);
\draw[->] (-1,1.0) node[left]{$p_2$} -- (8,1.0);
\draw[->] (-1,0.5) node[left]{$p_1$} -- (8,0.5);
\draw[->] (-1,0.0) node[left]{$p_0$} -- (8,0.0);

\draw (3.75,2.35) node{$\emptyset$};
\draw (4.5,2.35) node{$\{a\}$};
\draw (5.5,2.35) node{$\{a, b\}$};

\draw[colorA,  very thick] (-0.5,2.0) +(0,-0.2) -- +(0,0.2) +(0.8,0.4) node{$a: \REG[4].\AlgoWrite(1)$};
\draw[colorB, very thick] (-0.5,0.0) +(0,-0.2) -- +(0,0.2) +(0.8,-0.4) node{$b: \REG[0].\AlgoWrite(1)$};

\draw[-latex, colorA] (-0.5,2.0) to[out=90,in=30,distance=10] (-0.5,2.0);
\draw[-latex, colorA] (-0.5,2.0) -- (1,1.5);
\draw[-latex,colorA] (1,1.5) to[out=-150,in=-90,distance=10] (1,1.5);
\draw[-latex,colorA] (1,1.5) -- (2.5,1.0);
\draw[-latex,colorA] (1,1.5) -- (2.5,2.0);
\draw[-latex,colorA] (2.5,1.0) to[out=-150,in=-90,distance=10] (2.5,1.0);
\draw[-latex,colorA] (2.5,1.0) -- (4.0,1.5);
\draw[-latex,colorA] (2.5,1.0) -- (4.0,2.0);
\draw[-latex,colorA] (2.5,1.0) -- (4.5,0.5);
\draw[-latex,colorA] (2.5,1.0) -- (4.5,0.0);
\draw[-latex,colorA] (4.5,0.5) to[out=-150,in=-90,distance=10] (4.5,0.5);
\draw[-latex,colorA] (4.5,0.5) -- (7.0,0.0);
\draw[-latex,colorA] (4.5,0.5) -- (5.50,1.0);
\draw[-latex,colorA] (4.5,0.0) to[out=-150,in=-90,distance=10] (4.5,0.0);
\draw[-latex,colorA] (4.5,0.0) -- (7.0,0.5);

\draw[-latex,colorB] (-0.5,0.0) to[out=-90,in=-30,distance=10] (-0.5,0.0);
\draw[-latex,colorB] (-0.5,0.0) -- (1,0.5);
\draw[-latex,colorB] (1.0,0.5) to[out=150,in=90,distance=10] (1.0,0.5);
\draw[-latex,colorB] (1.0,0.5) -- (4.0,1.0);
\draw[-latex,colorB] (1.0,0.5) -- (3.0,0.0);
\draw[-latex,colorB] (1.0,0.5) -- (3.0,2.0);
\draw[-latex,colorB] (4.0,1.0) to[out=150,in=90,distance=10] (4.0,1.0);
\draw[-latex,colorB] (4.0,1.0) -- (5.5,0.0);
\draw[-latex,colorB] (4.0,1.0) -- (5,1.5);
\draw[-latex,colorB] (4.0,1.0) -- (5,2.0);
\draw[-latex,colorB] (4.0,1.0) -- (5.5,0.5);
\draw[-latex,colorB] (5.0,1.5) to[out=150,in=90,distance=10] (5.0,1.5);
\draw[-latex,colorB] (5.0,1.5) -- (7.0,1.0);
\draw[-latex,colorB] (3,2.0) to[out=150,in=90,distance=10] (3,2.0);
\draw[-latex,colorB] (5.0,2.0) -- (7.0,1.5);

\end{tikzpicture}

}
\caption{An execution of Algorithm~\ref{algo:SCS}. An update is validated by a process when it has received enough messages for this update, and all the other updates it depends of have also been validated.}
\label{fig:expl_algo:handshake}
%\end{subfigure}
%\hspace{\fill}
\end{figure}

This mechanism is illustrated in Fig. \ref{fig:expl_algo:handshake}, featuring five processes. Processes $p_0$ and $p_4$ initially call operation $\REG.\AlgoWrite(1)$. Messages that have an impact in the algorithm are represented by arrows, and messages that do not appear on the figure are received later. 
Several situations may occur. The simplest case is process $p_3$, that received three messages concerning $a$ (from $p_4$, $p_3$ and $p_2$, with $3>\frac{n}{2}$) before its first message concerning $b$, allowing it to validate $a$. The case of process $p_4$ is similar: even if it knows that process $p_1$ saw $b$ before $a$, it received messages concerning $a$ from three \emph{other} processes, which allows it to ignore the message from $p_1$. 
At first sight, the situation of processes $p_0$ and $p_1$ may look similar to the situation of $p_4$. However, the message they received concerning $a$ 
and one of the messages they received concerning $b$ come from the same process $p_2$, which forces them to respect the dependency $a\rightarrow_0 b$.
Note that the same situation occurs on process $p_2$ so, even if $a$ has been validated before $b$ by other processes, $p_2$ must respect the dependency 
$b\rightarrow_2 a$. 

Sequential consistency requires the total order to contain the process order. Therefore, a snapshot of process $p_i$ must return values at least as recent as its last updated value.
In other words, it is not allowed to return from a snapshot between an update and the time when it is validated 
(grey zones in Fig.~\ref{fig:expl_algo:handshake}). There are two ways to implement this: we can either wait at the end of each update until it is validated, in 
which case all snapshot operations are done for free, or wait at the beginning of all snapshot operations that immediately follow an update operation. This
extends the remark of~\cite{attiya1994sequential} to crash-prone asynchronous systems: to implement a sequentially consistent memory, it is necessary and
sufficient to wait either during read or during write operations. In Algorithm~\ref{algo:SCS}, we chose to wait during read/snapshot operations (line~\ref{al:SCS:r1}). 
This is more efficient for two reasons: first, it is not necessary to wait between two consecutive updates, which can not be avoided if we 
wait at the end of the update operation, and second the time between the end of an update and the beginning of a snapshot counts in the validation 
process, but it can be used for local computations. Note that when two snapshot operations are invoked successively, the second one also returns immediately, 
which improves the result of~\cite{attiya1994sequential} according to which waiting is necessary for all the operations of one kind.

\begin{figure}[t]
%\begin{subfigure}[b]{0.63\textwidth}
    \centering
\scalebox{1.1}{
\begin{tikzpicture}

\colorlet{colorA}{black!20!red}
\colorlet{colorB}{black!20!blue}

\draw[colorA] (1.5,-0.85) node{$a$};
\draw         (2  ,-0.85) node{$\rightleftharpoons$};
\draw[colorB] (2.5,-0.85) node{$b$};
\draw         (3  ,-0.85) node{$\rightleftharpoons$};
\draw[colorA] (3.5,-0.85) node{$c$};
\draw         (4  ,-0.85) node{$\rightleftharpoons$};
\draw[colorB] (4.5,-0.85) node{$d$};
\draw         (5  ,-0.85) node{$\rightleftharpoons$};
\draw[colorA] (5.5,-0.85) node{$e$};
\draw         (6  ,-0.85) node{$\rightleftharpoons$};
\draw[colorB] (6.5,-0.85) node{$f$};
\draw         (7  ,-0.85) node{$\rightleftharpoons$};
\draw[colorA] (7.5,-0.85) node{$g$};
\draw         (8  ,-0.85) node{$\rightleftharpoons$};
\draw[colorB] (8.5,-0.85) node{$h$};
\draw         (9  ,-0.85) node{$\rightleftharpoons$};
\draw         (9.5,-0.85) node{$\dots$};

\draw[->] (0.5,1.5) node[left]{$p_3$} -- (9.5,1.5);
\draw[->] (0.5,1.0) node[left]{$p_2$} -- (9.5,1.0);
\draw[->] (0.5,0.5) node[left]{$p_1$} -- (9.5,0.5);
\draw[->] (0.5,0.0) node[left]{$p_0$} -- (9.5,0.0);

\draw[colorA, very thick] (1,1.5) +(0,-0.2) -- +(0,0.2) +(0,0.35) node{$a$};
\draw[colorA, very thick] (3,1.5) +(0,-0.2) -- +(0,0.2) +(0,0.35) node{$c$};
\draw[colorA, very thick] (5,1.5) +(0,-0.2) -- +(0,0.2) +(0,0.35) node{$e$};
\draw[colorA, very thick] (7,1.5) +(0,-0.2) -- +(0,0.2) +(0,0.35) node{$g$};
\draw[colorA] (9,1.85) node{$\dots$};

\draw[colorB, very thick] (1,0.0) +(0,-0.2) -- +(0,0.2) +(0,-0.35) node{$b$};
\draw[colorB, very thick] (3,0.0) +(0,-0.2) -- +(0,0.2) +(0,-0.35) node{$d$};
\draw[colorB, very thick] (5,0.0) +(0,-0.2) -- +(0,0.2) +(0,-0.35) node{$f$};
\draw[colorB, very thick] (7,0.0) +(0,-0.2) -- +(0,0.2) +(0,-0.35) node{$h$};
\draw[colorB] (9,-0.35) node{$\dots$};

\draw[-latex,colorA] (1,1.5) -- (1.67,1.0);
\draw[-latex,colorA] (3,1.5) -- (3.67,1.0);
\draw[-latex,colorB] (1,0.0) -- (4.33,1.0);
\draw[-latex,colorA] (5,1.5) -- (5.67,1.0);
\draw[-latex,colorB] (3,0.0) -- (6.33,1.0);
\draw[-latex,colorA] (7,1.5) -- (7.67,1.0);
\draw[-latex,colorB] (5,0.0) -- (8.33,1.0);
\draw[-latex,colorB] (7,0.0) -- (9.0,1.0);

\draw[-latex,colorB] (1,0.0) -- (1.67,0.5);
\draw[-latex,colorB] (3,0.0) -- (3.67,0.5);
\draw[-latex,colorA] (1,1.5) -- (4.33,0.5);
\draw[-latex,colorB] (5,0.0) -- (5.67,0.5);
\draw[-latex,colorA] (3,1.5) -- (6.33,0.5);
\draw[-latex,colorB] (7,0.0) -- (7.67,0.5);
\draw[-latex,colorA] (5,1.5) -- (8.33,0.5);
\draw[-latex,colorA] (7,1.5) -- (9.0,0.5);
\end{tikzpicture}
}
\caption{If we are not careful, infinite chains of dependencies may occur. We must avoid infinite chains of dependencies in order to ensure termination}
\label{fig:expl_algo:dependences}
%\end{subfigure}
%\caption{An update is validated by a process when it has received enough messages for this update, and all the other updates it depends of have also been 
%validated (Fig.~\ref{fig:expl_algo:handshake}). We must avoid infinite chains of dependencies in order to ensure termination (Fig.~\ref{fig:expl_algo:dependences}).\vspace{-4mm}}
%\label{fig:expl_algo}
\end{figure}
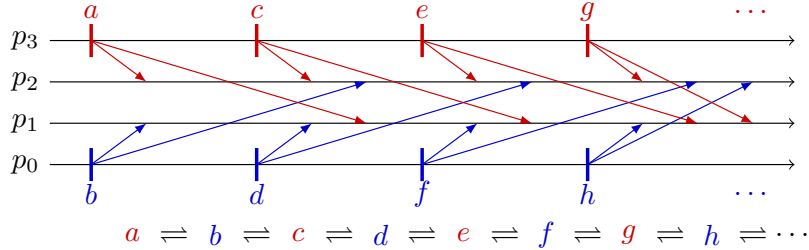

In order to obtain termination of the snapshot operations (and progress in general), it is necessary to ensure that all update operations are eventually
validated by all processes. This property is expressed by Lemma~\ref{lemma:liveness} from Section~\ref{sec:proof}.
Figure \ref{fig:expl_algo:dependences} illustrates what could happen. On the one hand, 
process $p_2$ receives a message concerning $a$ and a message concerning $c$ before a message concerning $b$. On the other hand, 
process $p_1$ receives a message concerning $b$ before messages concerning $a$ and $c$. Therefore, it may create dependencies 
$a\rightarrow_i b \rightarrow_i c \rightarrow_i b \rightarrow_i a$ on some process $p_i$, which means $p_i$ will be forced to validate $a$ and $c$
at the same time, even if they are ordered by the process order. The pattern in Fig.~\ref{fig:expl_algo:dependences} shows that it can 
result in an infinite chain of dependencies, blocking validation of any update operation. To break this chain, we force process $p_3$ to wait until $a$ is validated
locally before it proposes $c$ to validation, by storing the value written by $c$ in a local variable $\AlgoV_i$ until $a$ is validated (lines~\ref{al:SCS:w1} and~\ref{al:SCS:w4}). When $a$ is validated, 
we start the same validation process for $c$ (lines~\ref{al:SCS:mC1}-\ref{al:SCS:mC4}).
Remark that, if several updates (say $c$ and $e$) happen before $a$ is validated, the update of $c$ can be dropped as it will eventually be overwritten by $e$. 
In this case, $c$ will happen just before $e$ in the final linearization required for sequential consistency. 

This algorithm could be adjusted to implement multi-writer/multi-reader registers. Only three points must be changed. First, the identifier of the register written 
should be added to all messages and all $\Algog\in \AlgoG_i$. Second, concurrent updates on the same register must be ordered; this can be done, 
for example, by replacing $\AlgoSC_i$ by a Lamport Clock, that respects the order in which updates are validated, and using a lexicographic 
order on pairs $(\Algocl, \Algok)$. Third, variable $\AlgoV_i$ must be replaced by a set of update operations, and so does the value contained in the messages. 
All in all, this greatly complexifies the algorithm, without changing the way concurrency is handled. 
This is why we only focus on collections of SWMR registers here.

\subsection{Correctness}\label{sec:proof}

In order to prove that Algorithm~\ref{algo:SCS} implements a sequentially consistent snapshot memory, we must show that two important properties are verified 
by all histories it admits. These two properties correspond to lemmas~\ref{lemma:safety} and~\ref{lemma:liveness}. In Lemma~\ref{lemma:safety},
we show that it is possible to totally order the sets of updates validated by two processes at different moments. This allows us to build a total order
on all the operations. In Lemma~\ref{lemma:liveness}, we prove that all update operations are eventually validated by all processes. This is important 
to ensure termination of snapshot operations, and to ensure that update operations can not be ignored forever. 
Before that, Lemma~\ref{lemma:broadcast} expresses a central property on how the algorithm works: the fact that each correct process 
broadcasts a message corresponding to each written value proposed to validation.
Finally, Property~\ref{prop:correct} proves that all histories admitted by Algorithm~\ref{algo:SCS} are sequentially consistent.

In the following and for each process $p_i$ and local variable $x_i$ used in the algorithm, let us denote by $x_i^t$ the value of $x_i$ at time $t$.
For example, $\AlgoVC_i^0$ is the initial value of $\AlgoVC_i$. For arrays of $n$ integers $cl$ and $cl'$, we also denote by $cl \le cl'$ 
the fact that, for all $i$, $cl[i]\le cl'[i]$ and $cl < cl'$ if $cl \le cl'$ and $cl \neq cl'$.

\begin{lemma} \label{lemma:broadcast}
If a message $\AlgoM(\Algov, \Algok, \Algot, \Algocl)$ is broadcast by a correct process $p_i$, 
then each correct process $p_j$ broadcasts a unique message $\AlgoM(\Algov, \Algok, \Algot, \Algocl')$.

In the following, for all processes $p_j$ and pairs $(k, t)$, let us denote by $M_j(k,t)$ the message 
$\AlgoM(\Algov, \Algok, \Algot, \Algocl')$ and by $CL_j(k,t) = \Algocl'$ the stamp that $p_j$ put in this message. 
\end{lemma}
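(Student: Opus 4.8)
The plan is to prove the two assertions of Lemma~\ref{lemma:broadcast} separately: first that every correct process $p_j$ broadcasts \emph{at least} one message $\AlgoM(\Algov,\Algok,\Algot,\cdot)$, and second that it broadcasts \emph{at most} one such message. Both will follow from a careful reading of the message-handling code together with the FIFO and reliable-broadcast guarantees of the model, plus the fact that $\AlgoSC_i$ is strictly increasing and is used to stamp each broadcast (so all timestamps issued by a single process are distinct, which is the mechanism that keeps $p_j$ from re-broadcasting).

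\medskip
\noindent\textbf{Existence.} Suppose a correct process $p_i$ broadcasts $\AlgoM(\Algov,\Algok,\Algot,\Algocl)$. This broadcast happens at one of lines~\ref{al:SCS:w3}, \ref{al:SCS:mA6} or~\ref{al:SCS:mC3}. I would argue by a minimality/induction argument on the value of $\Algot$ together with $\Algok$ (equivalently, on the causal order of ``origin'' broadcasts): the update $(\Algok,\Algot)$ was first injected into the system by $p_{\Algok}$ itself via line~\ref{al:SCS:w3} or~\ref{al:SCS:mC3}, which is a genuine FIFO-broadcast, so by the reliable-broadcast property every correct process $p_j$ eventually \emph{receives} a message $\AlgoM(\Algov,\Algok,\Algot,\cdot)$. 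Consider the first such message received by $p_j$ (from some $p_{j'}$). When $p_j$ handles it, the guard on line~\ref{al:SCS:mA1} ($\Algot>\AlgoVC_j[\Algok]$) — I must check this holds; it does because, as I will note, $\AlgoVC_j[\Algok]$ only ever reaches $\Algot$ through validation of $(\Algok,\Algot)$, and validation requires the entry to be in $\AlgoG_j$, which in turn requires a prior message for $(\Algok,\Algot)$, contradicting ``first message''. Given the guard holds, the ``$\exists g\in\AlgoG_j$'' test on line~\ref{al:SCS:mA2} fails (again since this is the first message, no such $g$ was created), so we enter the else-branch; if $\Algok\neq j$, $p_j$ executes line~\ref{al:SCS:mA6} and broadcasts $\AlgoM(\Algov,\Algok,\Algot,\AlgoSC_j)$. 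If $\Algok=j$ then $p_j=p_{\Algok}$ already performed the broadcast at line~\ref{al:SCS:w3} or~\ref{al:SCS:mC3}. Either way $p_j$ broadcasts some $\AlgoM(\Algov,\Algok,\Algot,\Algocl')$.

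\medskip
\noindent\textbf{Uniqueness.} For ``at most one'', I would show every broadcast of a message tagged $(\Algok,\Algot)$ by $p_j$ either is $p_j$'s single origin-broadcast (when $j=\Algok$) or is the line~\ref{al:SCS:mA6} forward that happens exactly in the else-branch of line~\ref{al:SCS:mA2}, which is taken only on the \emph{first} received message for $(\Algok,\Algot)$ that passes the line~\ref{al:SCS:mA1} guard. After that first handling, $p_j$ has inserted an entry $g$ with $(g.\AlgoGK,g.\AlgoGT)=(\Algok,\Algot)$ into $\AlgoG_j$ (line~\ref{al:SCS:mA9}); this entry persists until $(\Algok,\Algot)$ is validated, at which point $\AlgoVC_j[\Algok]$ is set to $\Algot$ (line~\ref{al:SCS:mB6}), so that any later message for $(\Algok,\Algot)$ either still finds $g\in\AlgoG_j$ (take the line~\ref{al:SCS:mA3} branch — no broadcast) or finds $\AlgoVC_j[\Algok]\ge\Algot$ and is dropped at line~\ref{al:SCS:mA1} — no broadcast. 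I also need that $p_{\Algok}$ itself never re-broadcasts $(\Algok,\Algot)$: each origin-broadcast uses a fresh $\AlgoSC$ value for $\Algot$, and on receiving its own messages $p_{\Algok}$ falls into the $\Algok=i$ case which skips line~\ref{al:SCS:mA6}; moreover the ``buffer'' mechanism (lines~\ref{al:SCS:w1},\ref{al:SCS:w4},\ref{al:SCS:mC1}--\ref{al:SCS:mC4}) guarantees $p_{\Algok}$ has no pending unvalidated update of its own when it starts a new one, so distinct origin-broadcasts carry distinct $\Algot$. Hence the message $M_j(k,t)$ and its stamp $CL_j(k,t)$ are well-defined, justifying the notation introduced in the statement.

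\medskip
\noindent\textbf{Main obstacle.} The delicate point is the interaction between validation and the line~\ref{al:SCS:mA1} guard: I must be sure that $\AlgoVC_j[\Algok]$ can equal or exceed $\Algot$ only \emph{after} $p_j$ has already handled (and, if needed, forwarded) a message for $(\Algok,\Algot)$ — otherwise the existence half could fail, with $p_j$ silently dropping the only message it ever gets for that update. This requires tracking that the only writer of $\AlgoVC_j$ is line~\ref{al:SCS:mB6}, that it is monotone, and that it reaches $\Algot$ exactly when the corresponding $g$ (necessarily created earlier by an $\AlgoM(\cdot,\Algok,\Algot,\cdot)$ reception) is validated. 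The FIFO assumption is not strictly needed here but keeps the bookkeeping of ``first message from $p_{j'}$'' clean. Everything else is a direct case analysis on the three branches of the receive handler.
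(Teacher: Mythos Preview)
Your overall decomposition into existence and uniqueness mirrors the paper's proof, and your uniqueness argument is essentially the same as the paper's and is correct. However, there is a genuine gap in your existence argument, located precisely at the point you flag as the ``main obstacle'' --- and your proposed resolution does not close it.

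You write that $\AlgoVC_j[\Algok]$ ``reaches $\Algot$ exactly when the corresponding $g$ (necessarily created earlier by an $\AlgoM(\cdot,\Algok,\Algot,\cdot)$ reception) is validated''. This is not true: line~\ref{al:SCS:mB6} can set $\AlgoVC_j[\Algok]$ directly to some $\Algot' > \Algot$ by validating a \emph{later} update $(\Algok,\Algot')$, skipping over $\Algot$ entirely. In that case the $g$ being validated was created by a reception of $\AlgoM(\cdot,\Algok,\Algot',\cdot)$, not of $\AlgoM(\cdot,\Algok,\Algot,\cdot)$, so your ``contradicting first message'' argument does not fire. Nothing in the bare monotonicity of $\AlgoVC_j$ prevents the following scenario: $p_j$ first receives a message for $(\Algok,\Algot')$, forwards it, collects a majority, validates it (so $\AlgoVC_j[\Algok]=\Algot'>\Algot$), and only afterwards receives its first message for $(\Algok,\Algot)$ --- which is then silently dropped at line~\ref{al:SCS:mA1}, and $p_j$ never forwards $(\Algok,\Algot)$.

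Ruling this out is exactly where FIFO is essential, contrary to your claim that ``the FIFO assumption is not strictly needed here''. The paper's proof argues as follows: if $\AlgoVC_j[\Algok]\ge\Algot$ when $p_j$ handles the message, then $\AlgoVC_j[\Algok]$ was set on line~\ref{al:SCS:mB6} while validating some $g'$ with $g'.\AlgoGK=\Algok$ and $g'.\AlgoGT=\AlgoVC_j[\Algok]\ge\Algot$; that $g'$ was inserted after $p_j$ received a first message $M_l(\Algok,g'.\AlgoGT)$ from some $p_l$. Since $p_{\Algok}$ originated $M_{\Algok}(\Algok,\Algot)$ before $M_{\Algok}(\Algok,g'.\AlgoGT)$ and every other process only forwards (in the order it first receives), FIFO delivery guarantees $p_j$ received $M_l(\Algok,\Algot)$ from $p_l$ before $M_l(\Algok,g'.\AlgoGT)$; at that earlier moment $\AlgoVC_j[\Algok]<\Algot$, so the first case applies and $p_j$ did forward $(\Algok,\Algot)$. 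Without FIFO this chain-of-forwarding argument breaks, and the bad scenario above is possible. You should replace your monotonicity-only justification with this FIFO argument.
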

\begin{proof}
Let $p_i$ and $p_j$ be two correct processes, and suppose $p_i$ broadcasts a message $M_i(\Algok, \Algot)$.

First, we prove that $p_j$ broadcasts a message $M_j(\Algok, \Algot)$.
As $p_i$ is correct, $p_j$ will eventually receive the message sent by $p_i$. At that time, 
if $\Algot > \AlgoVC_j[\Algok]$, after the condition on line~\ref{al:SCS:mA2} and whatever its result,
$\AlgoG_i$ contains a value $\Algog$ with $\Algog.\AlgoGK = \Algok$ and $\Algog.\AlgoGT = \Algot$. That $g$ was inserted on 
line~\ref{al:SCS:mA2} (possibly after the reception of a different message), just after $p_j$ sent a message $M_j(\Algok, \Algot)$ 
at line~\ref{al:SCS:mA6}.
Otherwise, $\AlgoVC_j[\Algok]$ was incremented on line~\ref{al:SCS:mB6}, when validating some $\Algog'$, that was added in $\AlgoG_j$
after $p_j$ received a (first) message $M_l(\Algog'.\AlgoGK, \Algog'.\AlgoGT)$, with $\Algog'.\AlgoGK = \Algok$ and $\Algog'.\AlgoGT = \AlgoVC_j[\Algok]$.
Remark that, as FIFO reception is used, $p_{\Algok}$ sent message $M_{\Algok}(\Algok, \Algot)$ before $M_{\Algok}(\Algok, \AlgoVC_j[\Algok])$,
and all other processes only forward messages, $p_j$ received message $M_l(\Algok, \Algot)$ before $M_l(\Algok, \AlgoVC_j[\Algok])$, and at that time, 
$\Algot > \AlgoVC_j[\Algok]$, so the first case applies.

Now, we prove that $p_i$ will broadcast no other message with the same $\Algok$ and $\Algot$ later. 
If $i=\Algok$, the message would be sent on line~\ref{al:SCS:w3} or~\ref{al:SCS:mC3}, just after $\AlgoSC_i$ is incremented, 
which would lead to a different $\Algot$. Otherwise, the message would be sent on line~\ref{al:SCS:mA6}, which would mean 
the condition of line~\ref{al:SCS:mA2} is false. As $p_i$ broadcast a first message, a corresponding $\Algog$ was present in $Algog_i$,
deleted on line~\ref{al:SCS:mB4}, which would make the condition of line~\ref{al:SCS:mA1} to be false.
\end{proof}

\begin{lemma} \label{lemma:safety}
Let $p_i$, $p_j$ be two processes and $t_i$, $t_j$ be two time instants, and let us denote by $\AlgoVC_i^{t_i}$ (resp. $\AlgoVC_j^{t_j}$) the value of $\AlgoVC_i$ (resp. $\AlgoVC_j$) at time $t_i$ (resp. $t_j$). We have either, for all $k$, $\AlgoVC_i^{t_i}[k] \le \AlgoVC_j^{t_j}[k]$ or for all $k$, $\AlgoVC_j^{t_j}[k] \le \AlgoVC_i^{t_i}[k]$.
\end{lemma}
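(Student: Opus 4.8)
The statement says any two "validated sets" (encoded by the vector clocks $\AlgoVC_i^{t_i}$ and $\AlgoVC_j^{t_j}$) are comparable componentwise. My plan is to reduce this to a statement about single validation events and then argue by induction on the order in which updates get validated across the whole system. First I would observe that $\AlgoVC_i$ only ever grows (line~\ref{al:SCS:mB6} only increases entries), and that it changes only when a batch $G'$ of updates is validated in one execution of the message-reception handler; so it suffices to show that for any two validation events $E$ (on $p_i$, producing vector clock state $\AlgoVC_i^{t_i}$) and $E'$ (on $p_j$, producing $\AlgoVC_j^{t_j}$), one of the two resulting vector clocks dominates the other. Because the vectors are monotone in time, I can further reduce to comparing, for each update $u=(k,t)$, whether $u$ "is in" the validated set of $p_i$ at time $t_i$ (i.e. $\AlgoVC_i^{t_i}[k]\ge t$) — so the goal is: the family of validated sets, over all processes and all times, is totally ordered by inclusion.

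The heart of the argument is the dependency bookkeeping. Suppose, for contradiction, that there exist states with $u=(k,t)$ validated at $p_i$ but not at $p_j$, and $u'=(k',t')$ validated at $p_j$ but not at $p_i$. I would use the construction of $G'$ in lines~\ref{al:SCS:mB1}--\ref{al:SCS:mB3}: an update enters the final $G'$ only if it has been seen by a majority ($|\{l:g.\AlgoGCL[l]<\infty\}|>\frac n2$) and every update it depends on ($\rightarrow_i^\star$) is also being validated in the same batch or was validated before. The key quorum fact I would extract: if $p_i$ validated $u$ by time $t_i$ and $p_j$ validated $u'$ by time $t_j$, then $p_i$ has $>n/2$ stamps for $u$ and $p_j$ has $>n/2$ stamps for $u'$; these quorums intersect, so some process $p_\ell$ stamped both $u$ and $u'$, and it did so in some order, say it stamped $u$ before $u'$ (Lemma~\ref{lemma:broadcast} guarantees $p_\ell$ broadcasts exactly one message for each, with stamps $CL_\ell(k,t)$ and $CL_\ell(k',t')$, and $\AlgoSC_\ell$ is strictly increasing so these stamps are comparable and their order is the order in which $p_\ell$ handled the two updates). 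Then on $p_i$ — which has seen $p_\ell$'s stamp for $u$ but, I must argue, eventually also $p_\ell$'s stamp for $u'$ via FIFO — this contributes to the dependency $u'\rightarrow_i u$ unless a majority of processes stamped $u$ before $u'$. Playing the symmetric game at $p_j$ and chasing which quorum "wins", I would derive that one of the two updates depends (transitively) on the other in a way that forces them into the same validation batch, contradicting that $p_i$ validated $u$ without $u'$ (or $p_j$ validated $u'$ without $u$).

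The main obstacle I anticipate is making the dependency/majority counting airtight: the relation $\rightarrow_i$ is defined via "$|\{j:g'.\AlgoGCL[j]<g.\AlgoGCL[j]\}|\le n/2$", quantified over \emph{all} $g,g'$ ever inserted in $G_i$ for the given pairs, and the $\AlgoGCL$ entries fill in over time and are compared against the sentinel $\infty$; so I need a clean invariant of the form "at the moment $p_i$ validates $u$, for every update $u'$ not being validated with it, $p_i$ already possesses enough $\AlgoGCL$ entries to know $u \not\rightarrow_i^\star u'$ is false" — equivalently, that the set $G'$ surviving lines~\ref{al:SCS:mB1}--\ref{al:SCS:mB3} is downward closed under the dependency relation as currently known, and that this "currently known" relation is monotone enough that a later process cannot discover a reversing dependency. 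I would isolate this as the crucial invariant, prove it by induction on message-reception events using the FIFO channel property (so that stamps for a given writer arrive in timestamp order on every process, exactly as exploited in the proof of Lemma~\ref{lemma:broadcast}), and then the comparability of the two validated sets follows by the quorum-intersection contradiction sketched above.
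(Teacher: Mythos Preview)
Your approach is the paper's approach: contradiction, quorum intersection to find a common stamper, FIFO to transport an ordering to one of the two processes, and then the filtering condition on $G'$ (line~\ref{al:SCS:mB2}) to derive the contradiction. However, the paper's execution is considerably leaner than your plan. No induction on validation events is needed, and the transitive closure $\rightarrow_i^\star$ never enters: one picks \emph{directly} the two updates $(k,\AlgoVC_i^{t_i}[k])$ and $(k',\AlgoVC_j^{t_j}[k'])$ witnessing incomparability, intersects their two majorities to get a single $p_c$, and argues at \emph{one} process only. There is no ``symmetric game at $p_j$'': the WLOG on the order of $p_c$'s two stamps absorbs the symmetry, and the contradiction is obtained at whichever process received $p_c$'s \emph{later} stamp. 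Concretely, if $p_c$ stamped $(k',\cdot)$ before $(k,\cdot)$, then FIFO forces $p_i$ (which did receive $M_c(k,\cdot)$) to have received $M_c(k',\cdot)$ earlier; hence a $g_i^{k'}$ with $g_i^{k'}.\AlgoGCL[c]<g_i^{k}.\AlgoGCL[c]$ sits in $\AlgoG_i$ at the moment $g_i^k$ enters $G'$, and line~\ref{al:SCS:mB2} would then have ejected $g_i^k$ --- contradiction. That is the whole proof.

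Note also a directional slip in your sketch: you take WLOG ``$p_\ell$ stamped $u$ before $u'$'' and then try to argue at $p_i$, hoping FIFO delivers $p_\ell$'s stamp for $u'$ there. It will not: FIFO only guarantees that the \emph{earlier} message has arrived once the later one has. With your WLOG you must argue at $p_j$ (which received $p_\ell$'s stamp for $u'$, hence by FIFO also the earlier stamp for $u$), or equivalently flip the WLOG. Once corrected, your ``main obstacle'' dissolves: you do not need a separately proved downward-closure invariant, because the single inequality $g_i^{k'}.\AlgoGCL[c]<g_i^{k}.\AlgoGCL[c]$ already violates the survival condition for $g_i^k$ in $G'$ at the relevant instant.
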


\begin{proof}
Let $p_i$, $p_j$ be two processes and $t_i$, $t_j$ be two instants. Let us suppose (by contradiction) that 
there exist $k$ and $k'$ such that $\AlgoVC_j^{t_j}[k] < \AlgoVC_{i}^{t_i}[k]$ and $\AlgoVC_i^{t_i}[k'] < \AlgoVC_{j}^{t_j}[k']$.

As $\AlgoVC_i$ is only updated on line \ref{al:SCS:mB6}, at some time $t_i^k \le t_i$, there was $\Algog_i^k\in G'$ with 
$\Algog_i^k.\AlgoGK = k$ and $\Algog_i^k.t = \AlgoVC_i^{t_i}[k]$. According to line \ref{al:SCS:mB1}, we have 
$|\{l : \Algog_i^k.\AlgoGCL[l] < \infty\}| > \frac{n}{2}$ and according to lines \ref{al:SCS:mA3} and \ref{al:SCS:mA8},
each finite field $\Algog_i^k.\AlgoGCL[l]$ corresponds to the reception of a message $M_l(k, \AlgoVC_i^{t_i}[k])$.
Similarly, process $p_j$ received messages $M_l(k', \AlgoVC_j^{t_j}[k'])$ from more than $\frac{n}{2}$ processes.
Since the number of processes is $n$, the intersection of these two sets of processes is not empty.

Let $p_c$ be a process that belongs to both sets, i.e. $p_c$ broadcast messages $M_c(k, \AlgoVC_i^{t_i}[k])$ and
$M_c(k', \AlgoVC_j^{t_j}[k'])$. Process $p_c$ sent these two messages in a given order, let us say
$M_c(k', \AlgoVC_j^{t_j}[k'])$ before $M_c(k, \AlgoVC_i^{t_i}[k])$ (the other case is symmetric).
As $\AlgoSC_c$ is never decremented and it is incremented before all sendings, $CL_c(k', \AlgoVC_j^{t_j}[k']) < CL_c(k, \AlgoVC_i^{t_i}[k])$. 
Moreover, as the protocol uses FIFO ordering, $p_i$ received the two messages in the same order.

According to line~\ref{al:SCS:mB6}, $\AlgoVC_i$ can only increase, so $\AlgoVC_i^{t'_i}[k'] \le \AlgoVC_i^{t_i}[k']$ and $\AlgoVC_i^{t_i}[k'] < \AlgoVC_j^{t_j}[k']$. It means that the condition on line~\ref{al:SCS:mA1} was true when 
$p_i$ received $M_c(k', \AlgoVC_j^{t_j}[k'])$. Then, after the execution of the condition starting on line~\ref{al:SCS:mA2} and whatever the result of this condition,
there was a $\Algog_i^{k'}\in \AlgoG_i$ with $\Algog_i^{k'}.\AlgoGK = k'$, $\Algog_i^{k'}.\AlgoGT = \AlgoVC_j^{t_j}[k']$ and $\Algog_i^{k'}.\AlgoGCL[c] = CL_c(k', \AlgoVC_j^{t_j}[k'])$.

At time $t_i$, if $g_i^{k'}\not\in G_i$, it was removed on line~\ref{al:SCS:mB4}, which means $\AlgoVC_i^{t}[k']\ge \Algog_i^{k'}.\AlgoGT = \AlgoVC_j^{t_j}[k']$
by lines~\ref{al:SCS:mB5} and~\ref{al:SCS:mB6}, which is absurd by our hypothesis. Otherwise, after line~\ref{al:SCS:mB2} was executed at time $t_i^k$,
we have $\Algog_i^{k}\in G'$ and $\Algog_i^{k'}\not\in G'$, which is impossible as $\Algog_i^{k'}.\AlgoGCL[c]\le \Algog_i^{k}.GCL[c]$.

This is a contradiction. Therefore $\AlgoVC_i^{t_i} \le \AlgoVC_j^{t_j}$ or $\AlgoVC_j^{t_j} \le \AlgoVC_i^{t_i}$.
\end{proof}

\begin{lemma} \label{lemma:liveness}
If a message $\AlgoM(\Algov,i,\Algot,\Algot)$ is sent by a correct process $p_i$, then beyond some time $t'$, 
for each correct process $p_j$, $\AlgoVC_j^{t'}[i] \ge \Algot$.
\end{lemma}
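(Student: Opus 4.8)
The plan is to argue by contradiction: assume some correct process $p_j$ never reaches $\AlgoVC_j[i] \ge \Algot$, and derive a contradiction from the postponing mechanism (lines~\ref{al:SCS:w1},~\ref{al:SCS:w4} and~\ref{al:SCS:mC1}--\ref{al:SCS:mC4}). First I would settle the ``enough messages'' part. Since $p_i$ is correct and broadcasts $\AlgoM(\Algov,i,\Algot,\Algot)$, Lemma~\ref{lemma:broadcast} gives that every correct process broadcasts a message $M_l(i,\Algot)$; moreover, because updates are FIFO-forwarded (line~\ref{al:SCS:mA6}), the mere fact that $p_j$ holds a record for $(i,\Algot)$ in $\AlgoG_j$ implies (via Lemma~\ref{lemma:broadcast} applied to a forwarder) that every correct process broadcasts such a message, hence by reliability and $n-t>\frac{n}{2}$ that $p_j$ eventually receives more than $\frac{n}{2}$ of them, permanently. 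So after some finite time either $\AlgoVC_j[i]\ge\Algot$ (done), or the record $\Algog$ for $(i,\Algot)$ sits in $\AlgoG_j$ with more than $\frac{n}{2}$ finite entries in $\Algog.\AlgoGCL$: it enters the set $G'$ built on line~\ref{al:SCS:mB1} but is expelled on line~\ref{al:SCS:mB3}, i.e. some not-yet-validated update in $\AlgoG_j\setminus G'$ is a $\rightarrow_j^\star$-predecessor of $(i,\Algot)$.

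The same forwarding argument shows that \emph{every} record that ever enters $\AlgoG_j$ eventually, and until it is validated, has more than $\frac{n}{2}$ finite $\AlgoGCL$-entries; therefore ``lacking a majority of stamps'' is only a transient obstruction, and no fixed finite set of pending updates can keep $(i,\Algot)$ out of $G'$ forever. Hence the updates that expel $(i,\Algot)$ from $G'$ at the successive validation steps form an infinite set of pairwise distinct updates, each of which is, at some step, a not-yet-validated $\rightarrow_j^\star$-predecessor of $(i,\Algot)$ that is forever-pending at $p_j$; by the pigeonhole principle on the $n$ originators, some process $p_k$ originates infinitely many of them, say $v_1,v_2,\dots$ ordered by the timestamp $p_k$ assigned.

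Now the postponing mechanism enters. Each $v_m$ was in fact broadcast by $p_k$ itself on line~\ref{al:SCS:w3} or~\ref{al:SCS:mC3} (trace the forwards back to their source), and the guard ``$\forall \Algog\in\AlgoG_k:\Algog.\AlgoGK\neq k$'' on lines~\ref{al:SCS:w1} and~\ref{al:SCS:mC1} forces $p_k$ to validate $v_m$ locally before it broadcasts $v_{m+1}$. Consequently $p_k$ validates $v_1,v_2,\dots$ at strictly increasing times, and --- the key point that rules out the vicious cycles of Figure~\ref{fig:expl_algo:dependences} inside a single process --- when $p_k$ validates $v_m$, a strict majority of processes have already seen $v_m$, hence have sent their stamp for $v_m$ strictly before any stamp for the later-proposed $v_{m'}$ ($m'>m$); so $\lvert\{l:CL_l(v_m)<CL_l(v_{m'})\}\rvert>\frac{n}{2}$ and therefore $v_{m'}\not\rightarrow_j v_m$. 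To turn this into a contradiction I would order $\mathcal{P}$, the set of updates forever-pending at some correct process, by their proposal time $\tau$ (pairwise distinct and, in the absence of Zeno runs, well-ordered), pick the $\tau$-minimal element $u^\ast=(k^\ast,t^\ast)\in\mathcal{P}$ with witness $p_j$, and observe that by minimality every forever-pending $\rightarrow_j^\star$-predecessor of $u^\ast$ is proposed no earlier than $u^\ast$; combined with the inequality above, no self-update of $p_{k^\ast}$ proposed after $u^\ast$ can be a $\rightarrow_j$-predecessor of $u^\ast$. Iterating this observation down the infinite predecessor chain and re-applying the pigeonhole argument, the chain is forced to keep introducing fresh originators, which is impossible since there are only $n$ of them. (Equivalently one can factor out the auxiliary claim ``an update validated at some correct process is eventually validated at every correct process'', which follows from Lemma~\ref{lemma:safety}: the vector clocks $\AlgoVC$ are pairwise comparable, so $p_j$ either catches up or stays forever bounded, and ``forever bounded'' is incompatible with the infinite predecessor chain above.)

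I expect the main obstacle to be exactly this last step: making rigorous that the postponing, which on its face only tames each process's \emph{own} chain of self-updates, is enough to forbid a \emph{global} infinite $\rightarrow_j^\star$-chain running through several processes; the careful bookkeeping that separates the transient obstruction (too few stamps yet) from the real one (an unvalidated predecessor) is where the argument has to be carried out precisely. A separate and easy case is the quiescent one: if only finitely many updates are ever proposed, then after some time every record in $\AlgoG_j$ has a full $\AlgoGCL$-vector, so $\AlgoG_j\setminus G'=\emptyset$ and all of $\AlgoG_j$ --- in particular the record for $(i,\Algot)$ --- is validated together on line~\ref{al:SCS:mB4}.
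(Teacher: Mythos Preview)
Your high-level structure matches the paper's: assume $(i,\Algot)$ has infinitely many $\rightarrow_j^\star$-predecessors, pigeonhole onto a single originator $p_k$, and use the postponing guard to derive a contradiction. Your key inequality $|\{l:CL_l(v_m)<CL_l(v_{m'})\}|>\frac{n}{2}$ for $m'>m$, hence $v_{m'}\not\rightarrow_j v_m$, is exactly the mechanism the paper exploits.

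Where you diverge is in the closing step, and there your argument has a real gap. The inequality only kills the \emph{direct} edge $v_{m'}\rightarrow_j v_m$; it does not rule out $v_{m'}\rightarrow_j^\star v_m$ via detours through other processes' updates. Your ``fresh originators'' claim does not follow: an infinite descending chain below $u^\ast$ can perfectly well revisit $p_{k^\ast}$-updates along paths that pass through updates of other processes, so nothing forces the set of originators to grow. The minimal-element device does not break this, because minimality in proposal time says nothing about the internal structure of $\rightarrow_j$-paths.

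The paper closes the argument differently and more concretely. From the infinitely many late $p_k$-updates it extracts a $\rightarrow_j$-path $(k,t'_1)\rightarrow_j(k_2,t'_2)\rightarrow_j\cdots\rightarrow_j(k,t'_m)$ with $t'_1<t'_m$ (using that $(i,\Algot)\rightarrow_j(k,t_l)$ once $p_k$ has received $M_j(i,\Algot)$), and then performs a case analysis along that path: either $p_k$ had already received a message for $(k_2,t'_2)$ before broadcasting $(k,t'_1)$, in which case FIFO forwarding forces \emph{every} process to stamp $(k_2,t'_2)$ before $(k,t'_1)$, contradicting $(k,t'_1)\rightarrow_j(k_2,t'_2)$; or there is a last index $l$ along the path that $p_k$ had \emph{not} yet seen, and then the guard on lines~\ref{al:SCS:w1}/\ref{al:SCS:mC1} together with the validation condition on lines~\ref{al:SCS:mB1}--\ref{al:SCS:mB2} forces a majority to have stamped $(k_{l+1},t'_{l+1})$ before $(k_l,t'_l)$, while $(k_l,t'_l)\rightarrow_j(k_{l+1},t'_{l+1})$ forces the opposite majority, and two majorities intersect. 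This path-based case split is what replaces your unfinished ``iterate and introduce fresh originators'' step.
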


\begin{proof}
Let us suppose a message $M_i(i,t)$ is sent by a correct process $p_i$. 

Let us suppose (by contradiction) that there exists a process $p_j$ such that the pair $(i, t)$ has an infinity of
predecessors according to $\rightarrow_j^\star$.
As the number of processes is finite, an infinity of these predecessors correspond to the same process, let us say $(k, t_l)_{l\in \mathbb{N}}$.
As $p_j$ is correct, $p_k$ eventually receives message $M_j(i,t)$, which means an infinity of messages $m_k(k,t_l)$
were sent after $p_k$ receives message $m_j(i,t)$, and for all of them, $(k,t_l)\rightarrow_i^\star (i,t)\rightarrow_i (k,t_l)$.
Therefore, there exists a sequence $(k_1, t'_1) \rightarrow_i (k_2, t'_2) \rightarrow_i \dots \rightarrow_i (k_m, t'_m)$ 
with $k_1 = k_m = k$ and $t'_m > t'_1$. Two cases are possible for $(k_2, t'_2)$:
\begin{itemize}
\item If $p_k$ received a message $M_x(k_2, t'_2)$ (from any $p_x$) before it sent $M_k(k, t'_1)$,
  then $p_k$ also send $M_k(k_2, t'_2)$ before it sent $M_k(k, t'_1)$, and all processes 
  received these messages in the same order (and possibly a message $M_x(k_2, t'_2)$ even before from another process),
  which is in contradiction with the fact that $(k, t'_1) \rightarrow_i (k_2, t'_2)$.
\item Otherwise, there is an index $l$ such that process $p_{k}$ received a message $M_x(k_{l'}, t'_{l'})$ (from any $p_x$)
  for all $l'>l$ but not for $l'=l$, before it sent message $M_k(k, t'_1)$. Whether it finally sends it on line \ref{al:SCS:w3} or line \ref{al:SCS:mC3},
  there was no $\Algog\in \AlgoG_i$ corresponding to $(k_m, t'_m)$ so, by lines \ref{al:SCS:mB1}-\ref{al:SCS:mB2}, $p_{k}$ received messages 
  $M_x(k_{l'}, t'_{l'})$ for all $l'>l$, from a majority of processes $p_x$, and all of them sent $M_x(k_{l'}, t'_{l'})$ before $M_x(k_{l}, t'_{l})$.
  As $(k_{l}, t'_{l}) \rightarrow_i (k_{l+1}, t'_{l+1})$ and FIFO reception is used, a majority of processes sent 
  $M_x(k_{l}, t'_{l})$ before $M_x(k_{l'}, t'_{l'})$. This is impossible as two majorities always have a non-empty intersection.
  Therefore, this case is also impossible.
\end{itemize}
Finally, for all correct processes $p_j$, there exists a finite number of pairs $(k, t')$ such that $(k, t') \rightarrow_j (i, t)$. 
As $p_j$ is correct, according to Lemma~\ref{lemma:broadcast}, $p_j$ will eventually receive a message $M_x(k,t')$ for all of them
from all correct processes, which are in majority.
At the last message, on line \ref{al:SCS:mB5}, $G'$ will contain a $\Algog$ with $\Algog.\AlgoGK=i$ and $\Algog.\AlgoGT=t$
and after it executed line \ref{al:SCS:mB6}, it will have $\AlgoVC_j[i] \ge t$. As $\AlgoVC_j[i]$ can only grow and what precedes is true for all $j$,
eventually it will be true for all correct processes.
\end{proof}

Finally, given Lemmas~\ref{lemma:safety} and~\ref{lemma:liveness}, it is possible to prove that Algorithm \ref{algo:SCS} implements a sequentially consistent snapshot memory (Proposition~\ref{prop:correct}). The idea is to order snapshot operations according to the order given by Lemma~\ref{lemma:safety} on
the value of $\AlgoVC_i$ when they were made and to insert the update operations at the position where 
$\AlgoVC_i$ changes because they are validated. It is possible to complete this order into a linearization 
order, thanks to Lemma~\ref{lemma:liveness}, and to show that the execution of all the operations in that order
respects the sequential specification of the snapshot memory data structure.

\pagebreak
\begin{proposition}\label{prop:correct}
    All histories admitted by Algorithm \ref{algo:SCS} are sequentially consistent.
\end{proposition}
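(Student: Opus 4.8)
The plan is to fix an arbitrary history $H$ produced by Algorithm~\ref{algo:SCS} and to build, following the sketch above, an explicit linear extension $\le$ whose associated word belongs to the sequential specification of the snapshot memory. By Lemma~\ref{lemma:safety} the set $\mathcal{V}$ of all values ever taken by any $\AlgoVC_i$ is totally ordered for the componentwise order; being a chain in $\mathbb{N}^n$, it is well-ordered of order type at most $\omega$, so I can enumerate its distinct elements $v_0 < v_1 < \cdots$ with $v_0=(0,\dots,0)$. Each update is identified with the pair $(k,t)$ of its author and the timestamp $t$ it received when broadcast on line~\ref{al:SCS:w3} or~\ref{al:SCS:mC3}.

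I would then assign a level to every operation: a snapshot performed by $p_i$ and returning at time $\tau$ (line~\ref{al:SCS:r2}) gets the index $m$ with $\AlgoVC_i^{\tau}=v_m$, and a broadcast update $(k,t)$ gets $\min\{m: v_m[k]\ge t\}$, which is well defined by Lemma~\ref{lemma:liveness} and Lemma~\ref{lemma:broadcast} except possibly when $\mathcal{V}$ is finite. Operations are ordered primarily by level; within one level the updates are placed before the snapshots; updates of the same level are ordered by register (operations on distinct registers commute) and, on a common register, by timestamp; a value that was buffered in $\AlgoV_i$ and then overwritten is inserted just before the update that supersedes it, and a value re-proposed on line~\ref{al:SCS:mC3} is handled at the level of its re-proposal; snapshots of the same level keep their real-time order on each process, with remaining ties broken arbitrarily. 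A value that a process leaves in $\AlgoV_i$ and never propagates before crashing --- and, likewise, any returned write whose broadcast reaches no correct process --- never influences any other process, so by the convention on crashes it may be treated as not having taken place.

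The next step is to verify that $\le$ is a linear extension. Totality is by construction. Process order: successive updates of $p_i$ carry strictly increasing timestamps (since $\AlgoSC_i$ is incremented before every broadcast), hence non-decreasing levels with ties broken by timestamp; a snapshot of $p_i$ following a write blocks on line~\ref{al:SCS:r1} until $p_i$ has no pending update, so it returns a vector whose $i$-th component is at least the timestamp of that write, placing it at or after the write's level; a write of $p_i$ following a snapshot is broadcast immediately (the condition of line~\ref{al:SCS:r1} still holds and $\AlgoV_i=\bot$) with a fresh larger timestamp, hence after the snapshot; two snapshots of $p_i$ are correctly ordered because $\AlgoVC_i$ never decreases. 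For finite pasts I would note that there are finitely many levels below any given one, that a faulty process contributes finitely many operations to any level, and that a correct process cannot stay forever at a level strictly below $\sup\mathcal{V}$: any update received by one correct process is, by Lemma~\ref{lemma:broadcast}, eventually broadcast by all correct processes --- a majority --- and, together with the finiteness of its $\rightarrow_j^\star$-predecessors (the core of Lemma~\ref{lemma:liveness}), eventually validated by every correct process; if $\mathcal{V}$ has a maximum the associated top level may be infinite but has no successor, so every past stays finite.

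Finally I would check the word against the sequential specification. Fix a snapshot $s$ of $p_i$ at level $v_m$ and a register $k$: the updates on $k$ preceding $s$ in $\le$ are exactly the $(k,t)$ with $t\le v_m[k]$, the last being $(k,v_m[k])$, which is a genuine broadcast update because $\AlgoVC_i^{\tau}[k]=v_m[k]$ was set on line~\ref{al:SCS:mB6} precisely by validating it; and the value $s$ returns for $k$ is $\AlgoX_i^{\tau}[k]$, which by lines~\ref{al:SCS:mA7} and~\ref{al:SCS:mB6} is the value that update wrote (or the initial value when $v_m[k]=0$, in which case no update on $k$ precedes $s$). Thus each snapshot returns, register by register, the value of the most recent preceding update, which is exactly what the snapshot specification requires, so $H$ is sequentially consistent. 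I expect the main obstacle to be the bookkeeping around values that are not broadcast immediately --- buffered in $\AlgoV_i$, possibly overwritten, possibly re-proposed later --- so that they land consistently with both process order and the values concurrent snapshots return; the second delicate point is the finite-past property in infinite executions, where one must preclude a correct process lagging behind the rest of the system forever, which is exactly where Lemma~\ref{lemma:liveness} and the forwarding structure of the algorithm are needed.
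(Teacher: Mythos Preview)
Your argument is essentially the paper's: both attach to each operation a vector from the chain $\mathcal{V}$ of all values ever taken by any $\AlgoVC_i$ (the paper calls it $op.clock$, you index it by level $m$), order primarily by that vector, put updates before snapshots at equal clock, and break remaining ties by some order containing process order. Your explicit treatment of buffered/overwritten writes corresponds exactly to the paper's device of giving such a write the clock of the first message broadcast \emph{after} it (possibly carrying a later value) and then letting the tie-breaker $\lessdot$ supply process order among writes sharing that clock.

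One small slip: saying that remaining snapshot ties are broken ``arbitrarily'' is not sufficient at the (possibly infinite) top level. An arbitrary total order on a countable set need not have finite pasts --- e.g.\ placing all of $p_0$'s top-level snapshots before all of $p_1$'s gives $p_1$'s first snapshot an infinite past. Your sentence ``the associated top level may be infinite but has no successor, so every past stays finite'' does not address this, since the problem is \emph{within} that level. The paper avoids it by requiring from the outset that the tie-breaking order $\lessdot$ itself have finite pasts (any countable set admits such an order extending process order); you should impose the same constraint on your tie-breaker.
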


\begin{proof}
    Let $H$ be a history admitted by Algorithm \ref{algo:SCS}. For each operation $op$, let us define $op.clock$ as follows:
    \begin{itemize}
    \item If $op$ is a snapshot operation done on process $p_i$, $op.clock$ is the value of $\AlgoVC_i$ when $p_i$ executes line~\ref{al:SCS:r2}.
    \item If $op$ is an update operation done on process $p_i$, let us remark that the call to $op$ is followed by the sending of a message 
        $\AlgoM(v, i, cl_i, cl_i)$, either directly on line~\ref{al:SCS:w3} or later on line~\ref{al:SCS:mC3} as lemma~\ref{lemma:liveness} prevents
        the condition of line~\ref{al:SCS:mC1} to remain false forever (in this case, the value $v$ may be more recent from the one written in $op$).
        Let us consider the clock $cl_i$ of the first such message sent by $p_i$. We pose $op.clock$ as the smallest value taken by variable
        $\AlgoVC_j$ for any $j$ (according to the total order given by lemma~\ref{lemma:safety}) such that $op_i\le op.clock[i]$ 
        (such a clock exists according to lemma~\ref{lemma:liveness}).
    \end{itemize}
    Let $\lessdot$ be any total order on all the operations, that contains the process order,
    and such that all operation has a finite past according to $\lessdot$ ($\lessdot$ is only used to break ties).
    We define the relation $\leq$ on all operations of $H$ by $op\leq op'$ if
    \begin{enumerate}
    \item $op.clock < op'.clock$, or
    \item $op.clock = op'.clock$, $op$ is an update operation and $op'$ is a snapshot operation, or
    \item $op.clock = op'.clock$, $op$ and $op'$ are either two snapshot or two update operations and $op\lessdot op'$.
    \end{enumerate}
    Let us prove that $\leq$ is a total order.
    \begin{description}
        \item[reflexivity:] for all $op$, the third point in the definition is respected, as $\lessdot$ is a total order.
        \item[antisymmetry:] let $op, op'$ be two operations such that $op\leq op'\leq op$. We have $op.clock = op'.clock$, $op$ and $op'$ are either two snapshot or 
            two update operations and, as $\lessdot$ is antisymmetric, $op=op'$.
          \item[transitivity:] let $op, op', op''$ be three operations such that $op\leq op'\leq op''$.
            If $op.clock \leq op'.clock$ or $op'.clock \leq op''.clock$, then $op.clock \leq op''.clock$.
            Otherwise, $op.clock = op'.clock = op''.clock$. If the three operations are all update or all snapshot operations, $op.clock \lessdot op''.clock$ so $op.clock \leq op''.clock$.
            Otherwise, $op$ is an update and $op'$ is a snapshot so $op.clock \leq op''.clock$.
        \item[total:] let $op, op'$ be two operations. If $op.clock\neq op'.clock$, they are ordered according to lemma~\ref{lemma:safety}. Otherwise, they are ordered
            by one of the last two points.
    \end{description}

    Let us prove that $\leq$ contains the process order. Let $op$ and $op'$ be two operations that occurred on the same process $p_i$, on which $op$ preceded $op'$.
    According to lemma~\ref{lemma:safety}, $op.clock$ and $op'.clock$ are ordered. 
    \begin{itemize}
    \item If $op.clock < op'.clock$ then $op\leq op'$. 
    \item Let us suppose $op.clock = op'.clock$. It is impossible that $op$ is a read operation and $op'$ is an update operation: 
        as $\AlgoSC_i$ is always increased before $p_i$ sends a message, $op.clock[i] < op'.clock[i]$.
        If $op$ is an update operation and $op'$ is a snapshot operation, then $op\leq op'$. In the other cases, $op\lessdot op'$ so $op\leq op'$.
    \item We now prove case $op.clock > op'.clock$ cannot happen. As above, it is impossible that $o$ is a read operation and $o'$ is an update operation. 
        It is also impossible that $op$ and $op'$ are two read or two update operations because $\AlgoVC_i$ can only grow. 
        Finally, if $op$ is an update operation and $op'$ is a snapshot operation, $op.clock[i] \le op'.clock[i]$ thanks to line~\ref{al:SCS:r1}, 
        and by definition of $op.clock$ for update operations, $op.clock \le op'.clock$.
    \end{itemize}

    Let us prove that all operations have a finite past according to $\leq$. Let $op$ be an operation of the history. Let us first remark that, 
    for each process $p_i$, $op.clock[i]$ corresponds to a message $M_i(i, op.clock[i])$.
    According to lemma~\ref{lemma:liveness}, eventually, for all processes $p_i$, $\AlgoVC_i \ge op.clock$.
    Only a finite number of operations have been done before that, therefore $\{op': op'.clock < op.clock\}$ is finite. 
    Moreover, all the updates $op'$ with $op'.clock \le op.clock$ are done before that time, so there is a finite number of them. 
    If $op$ is an update operation, then its antecedents $op'$ verify either $op'.clock < op.clock$ or $op'.clock = op.clock$ and $op'$ is a write operation.
    In both cases, there is a finite number of them. If $op$ is a snapshot operation, its antecedents $op'$ verify either 
    (1) $op'.clock < op.clock$, (2) $op'.clock = op.clock$ and $op'$ is an update operation or (3) $op'.clock = op.clock$ and $op'$ is a snapshot operation. 
    Cases (1) and (2) are similar as above, and antecedents that verify case (3) also are its antecedents by $\lessdot$ so there is a finite number of them. 
    Finally, in all cases, $op$ has a finite number of antecedents. 
    
    Let us prove that the execution of all the operations in the order $\leq$ respects the sequential specification of memory. 
    Let $op$ be a snapshot operation invoked by process $p_i$ and let $p_j$ be a process. According to line~\ref{al:SCS:mB6}, 
    the value of $\AlgoX_i[j]$ corresponds to the value contained in a message $M_j(j, op.clock[j])$. Let $op'$ be the last 
    update operation invoked by process $p_j$ before it sent this message. Whether the message was sent on line \ref{al:SCS:w3} or \ref{al:SCS:mC3}, 
    $\AlgoX_i[j]$ is the value written by $op'$. Moreover, $op'.clock \le op.clock$ so $op'\leq o'$ and for all update operations $op''$ done by 
    process $p_j$ after $op'$, $op.clock < op''.clock$ so $op\leq op''$. All in all, $op$ returns the last values written on each register,
    according to the order $\leq$.
    
    Finally, $\leq$ defines a linearization of all the events of the history that respects the sequential specification of the shared object. 
    Therefore, $H$ is sequentially consistent.
\end{proof}

%%%%%%%%%%%%%%%%%%%%
\subsection{Complexity}\label{sec:complexity}
%%%%%%%%%%%%%%%%%%%%

\begin{figure}[t]
    \centering
    \scalebox{0.9}{
    \begin{tikzpicture}

      \draw (-0.5,0) -- (17.3,0) ;
      \draw (-0.5,0.5) -- (17.3,0.5) ;
      \draw (-0.5,1) -- (17.3,1) ;
      \draw (-0.5,1.5) -- (17.3,1.5) ;
      \draw (2.6,2.4) -- (17.3,2.4) ;

      \draw (-0.5,0) -- (-0.5,1.5) ;

      \draw (1,1.25) node{ABD \cite{attiya1995sharing}} ;
      \draw (1,0.75) node{ABD + AR \cite{attiya1995sharing,AttiyaR98}} ;
      \draw (1,0.25) node{Algorithm \ref{algo:SCS}} ;

      \draw (2.5,0) -- (2.5,1.5) ;
      \draw (2.6,0) -- (2.6,2.4) ;

      \draw (4.15,2.15) node{Read} ;

      \draw (3.5,1.75) node{\# messages} ;
      \draw (3.5,1.25) node{$\mathcal{O}(n)$} ;
      \draw[black!50] (3.5,0.75) node{$\sim$} ;
      \draw (3.5,0.25) node{$0$} ;

      \draw (4.4,0) -- (4.4,1.5) ;

      \draw (5.05,1.75) node{latency} ;
      \draw (5.05,1.25) node{$4$} ;
      \draw[black!50] (5.05,0.75) node{$\sim$} ;
      \draw (5.05,0.25) node{$0$ --- $4$} ;

      \draw (5.7,0) -- (5.7,2.4) ;

      \draw (7.25,2.15) node{Write} ;

      \draw (6.6,1.75) node{\# messages} ;
      \draw (6.6,1.25) node{$\mathcal{O}(n)$} ;
      \draw[black!50] (6.6,0.75) node{$\sim$} ;
      \draw (6.6,0.25) node{$\mathcal{O}(n^2)$} ;

      \draw (7.5,0) -- (7.5,1.5) ;

      \draw (8.15,1.75) node{latency} ;
      \draw (8.15,1.25) node{$2$} ;
      \draw[black!50] (8.15,0.75) node{$\sim$} ;
      \draw (8.15,0.25) node{$0$} ;

      \draw (8.8,0) -- (8.8,2.4) ;
      \draw (8.9,0) -- (8.9,2.4) ;

      \draw (11,2.15) node{Snapshot} ;

      \draw (9.95,1.75) node{\# messages} ;
      \draw[black!50] (9.95,1.25) node{$\sim$} ;
      \draw (9.95,0.75) node{$\mathcal{O}\left(n^2\log n\right)$} ;
      \draw (9.95,0.25) node{$0$} ;

      \draw (11,0) -- (11,1.5) ;

      \draw (12.05,1.75) node{latency} ;
      \draw[black!50] (12.05,1.25) node{$\sim$} ;
      \draw (12.05,0.75) node{$\mathcal{O}\left(n\log(n) \right)$} ;
      \draw (12.05,0.25) node{$0$ --- $4$} ;

      \draw (13.1,0) -- (13.1,2.4) ;

      \draw (15.2,2.15) node{Update} ;

      \draw (14.15,1.75) node{\# messages} ;
      \draw[black!50] (14.15,1.25) node{$\sim$} ;
      \draw (14.15,0.75) node{$\mathcal{O}\left(n^2\log n\right)$} ;
      \draw (14.15,0.25) node{$\mathcal{O}(n^2)$} ;

      \draw (15.2,0) -- (15.2,1.5) ;

      \draw (16.25,1.75) node{latency} ;
      \draw[black!50] (16.25,1.25) node{$\sim$} ;
      \draw (16.25,0.75) node{$\mathcal{O}\left(n\log(n)\right)$} ;
      \draw (16.25,0.25) node{$0$} ;

      \draw (17.3,0) -- (17.3,2.4) ;

    \end{tikzpicture}
    }
    \caption{Complexity of several algorithms to implement a shared memory.}
    \label{fig:complexity}
\end{figure}

In this section, we analyze the algorithmic complexity of Algorithm~\ref{algo:SCS} in terms of the number of messages 
and latency for snapshot and update operations. Fig.~\ref{fig:complexity} sums up this complexity and compares it with 
the standard implementation of linearizable registers \cite{attiya1995sharing}, as well as with the construction of a snapshot object \cite{AttiyaR98}
implemented on top of registers.

In an asynchronous system as the one we consider, the latency $d$ and the uncertainty $u$ of the network can not be expressed by constants. 
We therefore measure the complexity as the length of the longest chain of causally related messages to expect before an operation can complete.
For example, if a process sends a message to another process and then waits for its answer, the complexity will be $2$.

According to Lemma~\ref{lemma:broadcast}, it is clear that each update operation generates at most $n^2$ messages. 
The time complexity of an update operation is $0$, as update operations return immediately. 
No message is sent for snapshot operations. Considering its latency, in the worst case, a snapshot operation
is called immediately after two update operations $a$ and $b$. In this case, the process must wait until its own message for $a$
is received by the other processes, then to receive their acknowledgements, and then the same two messages must be routed for $b$,
which leads to a complexity of $4$. However, in the case of two consecutive snapshots, or if enough time has elapsed between a snapshot and the last update,
the snapshot can also return immediately.

In comparison, the ABD simulation uses solely a linear number of messages per operation (reads as well as writes), but waiting is necessary for both 
kinds of operations. Even in the case of the read operation, our worst case corresponds to the latency of the ABD simulation. 
Moreover, our solution directly implements the snapshot operation. Implementing a snapshot operation on top of a linearizable shared memory is 
actually more costly than just reading each register once. The AR implementation \cite{AttiyaR98}, that is (to our knowledge) 
the implementation of the snapshot that uses the least amount of operations on the registers, uses $\mathcal{O}(n\log n)$ operations on registers
to complete both a snapshot and an update operation. As each operation on memory 
requires $\mathcal{O}(n)$ messages and has a latency of $\mathcal{O}(1)$, our approach leads to a better performance in all cases. 

Algorithm~\ref{algo:SCS}, like~\cite{attiya1995sharing}, uses unbounded integer values to timestamp messages. 
Therefore, the complexity of an operation depends on the number $m$ of operations executed before it, in the linear extension. 
All messages sent by Algorithm~\ref{algo:SCS} have a size of 
$\mathcal{O}\left(log(n m)\right)$. In comparison, ABD uses messages of size $\mathcal{O}\left(log(m)\right)$ but implements only one register, 
so it would also require messages of size $\mathcal{O}\left(log(n m)\right)$ to implement an array of $n$ registers.

Considering the use of local memory, due to asynchrony, it is possible in some cases that $\AlgoG_i$ contains an entry $\Algog$
for each value previously written. In that case, the space occupied by $\AlgoG_i$ may grow up to $\mathcal{O}(m n\log m)$.
Remark however that, according to Lemma~\ref{lemma:safety}, an entry $\Algog$ is eventually removed from $\AlgoG_i$ 
(in a synchronous system, after $2$ time units if $\Algog.\AlgoGK = i$ or $1$ time unit if $\Algog.\AlgoGK \neq i$). 
Therefore, this maximal bound is not likely to happen. Moreover, if all processes stop writing (which is the case in the round based model
we discussed in Section~\ref{sec:round}), then eventually $\AlgoG_i$ becomes empty and the space occupied by the algorithm drops down to
$\mathcal{O}(n\log m)$, which is comparable to ABD. In comparison, the AR implementation keeps a tree containing past values from all registers, 
in each register, which leads to a much higher size of messages and local memory.

%%%%%%%%%%%%%%%%%%%%
\vspace{-1mm}
\section{Conclusion}\label{sec:conclusion}
%%%%%%%%%%%%%%%%%%%%

In this paper, we investigated the advantages of focusing on sequential consistency. Because of its non
composability, sequential consistency has received little focus so far. However, we show that in many applications, 
this limitation is not a problem. The first case concerns applications built on a layered architecture.
If one layer contains only one object, then it is impossible for objects built on top of it to determine if this
object is sequentially consistent or linearizable. The other example concerns round-based algorithms: if processes
access to one different sequentially consistent object in each round, then the overall history is also sequentially consistent.

Using sequentially consistent objects instead of their linearizable counterpart can be very profitable in terms of execution time of operations. Whereas waiting 
is necessary for both read and write operations when implementing linearizable memory,
we presented an algorithm in which waiting is only required for read operations when they follow directly a write operation.
This extends the result of Attiya and Welch (that only concerns synchronous failure-free systems) to asynchronous systems with crashes. Moreover, the proposed algorithm implements a sequentially consistent snapshot memory 
for the same cost, which results in a better message and time comlexity, for both kinds of operations, 
than the best known implementation of a snapshot memory.

Exhibiting such an algorithm is not an easy task for two reasons. First, as write operations are wait-free, a process may
write before its previous write has been acknowledged by other processes, which leads to ``concurrent'' write operations by the same process. 
Second, proving that an implementation is sequentially consistent is more difficult than proving it is linearizable since 
the condition on real time that must be respected by linearizability highly reduces the number of linear extensions that need to be considered.

\section{Acknowledgments}\label{sec:acknowledgments}

This work has been partially supported by the Franco-German ANR project DISCMAT under grant agreement ANR-14-CE35-0010-01. 
The project is devoted to connections between mathematics and distributed computing.

\bibliographystyle{plain}
\bibliography{biblio}

\end{document}